\newcommand{\argmax}{\textrm{argmax}}
\renewcommand{\l}{\mathpzc{l}}
\newcommand{\lt}{\ensuremath{\mathpzc{l}_t}}
\newcommand{\prob}{\mathbb{P}}
\renewcommand{\exp}{\mathbb{E}}
\newcommand{\D}{\ensuremath{\mathcal{D}}\xspace}
\newcommand{\Di}{\ensuremath{\mathcal{D}_i}\xspace}
\newcommand{\Fei}{\ensuremath{\widehat{F}_i}\xspace}
\newcommand{\Fi}{\ensuremath{F_i}\xspace}
\renewcommand{\lt}{\ensuremath{\ell^t}\xspace}
\newcommand{\rt}{\ensuremath{r^t}\xspace}
\newcommand{\vit}{\ensuremath{v^t_i}\xspace}
\newcommand{\bidin}[3]{\ensuremath{{#1}\textrm{ bids in }[#2, #3]}}
\newcommand{\winin}[3]{\ensuremath{{#1}\textrm{ wins in }[#2, #3]}}
\newcommand{\cwinin}[3]{\ensuremath{\prob[{#1}\textrm{ wins in }[#2, #3]| \max_{j} b_j \leq #3]}}
\newcommand{\allbidless}[1]{\ensuremath{\max_{j}b_{j} < #1}}
\newcommand{\ind}{\ensuremath{\mathbb{I}\xspace}}
\newtheorem{thm}{Theorem}[section]
\newtheorem{lemma}[thm]{Lemma}
\newtheorem{corollary}[thm]{Corollary}
\newtheorem{observation}[thm]{Observation}
\newtheorem{fact}[thm]{Fact}
\titlespacing{\section}{10pt}{10pt}{10pt}
\titlespacing{\subsection}{10pt}{10pt}{10pt}
\newenvironment{proof*}[1][\proofname]{\par
  \pushQED{\qed}%
  \normalfont \partopsep=\z@skip \topsep=\z@skip
  \trivlist
  \item[\hskip\labelsep
        \itshape
    #1\@addpunct{.}]\ignorespaces
}{%
  \popQED\endtrivlist\@endpefalse
}
\newcommand{\poly}{\textrm{poly}}
\newcommand{\ibot}{\ensuremath{\gamma\xspace}}
\newcommand{\pbot}[1]{\ensuremath{p_{#1}\xspace}}
\newcommand{\terr}{\ensuremath{\epsilon}\xspace}
\newcommand{\lj}{{\ensuremath{\ell_{\tau+1}}}\xspace}
\renewcommand{\l}[1]{{\ensuremath{\ell_{#1}}}\xspace}
\newcommand{\ljo}{{\ensuremath{\ell_{\tau}}}\xspace}
\newcommand{\est}[1]{\ensuremath{\widehat{#1}}\xspace}
\newcommand{\add}{\ensuremath{\alpha}\xspace}
\newcommand{\mult}{\ensuremath{\mu}\xspace}
\newcommand{\inside}{\ensuremath{\texttt{Inside}}\xspace}
\newcommand{\intervals}{\ensuremath{\texttt{Intervals}}\xspace}
\newcommand{\iwin}{\ensuremath{\texttt{IWin}}\xspace}
\newcommand{\kaplan}{\ensuremath{\texttt{Kaplan}}\xspace}
\renewcommand{\bot}{0}
\newcommand{\yt}{y^t}
\newcommand{\km}{\ensuremath{\texttt{KM}}}
\title{Learning Valuation Distributions from Partial Observation}
\author[1]{Avrim Blum\thanks{Supported in part by the National Science Foundation under grants CCF-1101215, CCF-1116892, CCF-1331175, and IIS-1065251.  Email: {\tt avrim@cs.cmu.edu}}}
\author[2]{Yishay Mansour\thanks{This research was supported in part by The Israeli Centers of Research Excellence (I-CORE) program,
(Center  No. 4/11), by a grant from the Israel Science Foundation (ISF), by a grant from United
  States-Israel Binational Science Foundation (BSF), and by a grant
  from the Israeli Ministry of Science (MoS). Email: {\tt mansour@tau.ac.il}}}
\author[1]{Jamie Morgenstern\thanks{Supported in part by the National Science Foundation under grants  CCF-1116892, CCF-1331175 and IIS-1065251 and by a Simons Award for Graduate Students in Theoretical Computer Science.  Email: {\tt jamiemmt@cs.cmu.edu}}}
\affil[1]{Computer Science Department, Carnegie Mellon University}
\affil[2]{Tel Aviv University and Microsoft Research, Hertzelia}
\begin{document}
\maketitle

\begin{abstract}
  Auction theory traditionally assumes that bidders' valuation
  distributions are known to the auctioneer, such as in the
  celebrated, revenue-optimal Myerson
  auction~\citep{myerson}. However, this theory does not describe how
  the auctioneer comes to possess this information.  Recently,
  ~\citet{cole2014sample} showed that an approximation based on a
  finite sample of independent draws from each bidder's distribution
  is sufficient to produce a near-optimal auction.  In this work, we
  consider the problem of learning bidders' valuation distributions
  from much weaker forms of observations.  Specifically, we consider a
  setting where there is a repeated, sealed-bid auction with $n$
  bidders, but all we observe for each round is {\em who} won, but not
  how much they bid or paid.  We can also participate (i.e., submit a
  bid) ourselves, and observe when {\em we} win. From this
  information, our goal is to (approximately) recover the inherently
  recoverable part of the underlying bid distributions.  We also
  consider extensions where different {\em subsets} of bidders
  participate in each round, and where bidders' valuations have a
  common-value component added to their independent private values.

\end{abstract}

\thispagestyle{empty}
\newpage
\clearpage
\setcounter{page}{1}

\newpage
\section{Introduction}

Imagine that you get a call from your supervisor, who asks you to find
out how much various companies are bidding for banner advertisements
on a competitor's web site. She wants you to recover the distribution
of the bids for each one of the advertisers. Your boss might have many
reasons why she wants this information: to compare their bids there
and on your web site; to use as market research for opening a new web
site which would be attractive to some of those advertisers; or simply
to estimate the projected revenue.

This would be a trivial task if your competitor was willing to give
you this information, but this is unlikely to happen. Industrial
espionage is illegal, and definitely not within your expertise as a
computer scientist. So, you approach this task from the basics and
consider what you might observe. At best, you might be able to observe
the outcome for a particular auction, namely the winner, but
definitely not the price, and certainly not the bids of all
participants. There is, however, a way to observe more detailed
information: \emph{you can participate in a sequence of auctions and
  see whether or not you win!} If you lose with a bid $b$, you know
that the winner (and perhaps other bidders) bid more than $b$; if you
win with bid $b$, you know every other bidder bid less than $b$. In
general, we assume you will also observe the winner of the auction
explicitly (e.g., you can visit the webpage and view the banner ad of
the auction in question). Is this a strong enough set of tools to
recover the distributions over independent but not necessarily
identical bid distributions?

If only your boss had instead given you the task of estimating the
winning bid distribution in that auction, you would be able to
accomplish this easily. By inserting random bids, and observing their
probability of winning, you would be able to recover the distribution
of the winning bid. However, this was not the task you were assigned:
your boss wants the distribution of each bidder's bids, not just those
where they win the auction.

As a first attempt, your esteemed colleague suggests a trivial (and
completely incorrect) approach (which you do not even consider). As
before, you can submit random bids, and observe for each advertiser,
how many times he wins in auctions with your random bid. This will
estimate the distribution over bids he makes in auctions he
wins. However, when we condition on a bidder $i$ winning, we should
expect to see a sample which is skewed towards higher bids.  To see
that the distribution over winning bids is a poor estimation for the
distribution over bids for each bidder, consider the following
example. Suppose you can even observe the bid of the winner. There are
$n$ advertisers, each bidding uniformly in $[0,1]$.  The distribution
of the winning bid of a given advertiser would have an expectation of
$\frac{n}{n+1}$, whereas the expectation of his bid is $\frac{1}{2}$;
indeed, the distribution over winning bids would be a poor
approximation to his true bid distribution, namely, uniform in
$[0,1]$.  Additional complications arise with this approach when
advertisers are asymmetric, which is certainly the case in practice.

At this point, you decide to take a more formal approach, since the
simplest possible technique fails miserably. This leads you to the
following abstraction. There are $n$ bidders, where bidder $i$ has bid
distribution $\Di$. The $n$ bidders participate in a sequence of
auctions. In each auction, each bidder draws an independent bid
$b_i\sim \Di$ and submits it.\footnote{We remark that if the repeated
  auction is incentive-compatible the bid and valuation of the
  advertiser would be the same (and we use them interchangeably).  If
  this is not the case, then $\Di$ should be viewed as the
  distribution of bidder $i$'s {\em bids}.} We have the power to
submit a bid $b_0$, which is independent of the bids $b_i$, to the
auction. After each auction we observe the identity of the winner (but
nothing else about the bids).  Our goal is to construct a distribution
$\widehat{D}_i$ for each advertiser $i$ which is close to $\Di$ in
total variation.  Our main result in this work is to solve this
problem efficiently. Namely, we derive a polynomial time algorithm
(with polynomial sample complexity) that recovers an approximation
$\widehat{D}_i$ of each of the distributions $\Di$, down to some price
$p_\ibot$, below which there is at most $\ibot$ probability of any
bidder winning.\footnote{If the winning bid is never (or very rarely)
  below some price $p$, then we will not be able to learn
  approximations to the distributions $\Di$ below $p$.  For example,
  if bidder 1's distribution $\D_1$ has support only on $[\frac{1}{2},
  1]$ and bidder 2's distribution $\D_2$ has support only on $[0,
  \frac{1}{2})$, then since the winning bid is always at least
  $\frac{1}{2}$, we will never be able to learn anything about $\D_2$
  other than the fact that its support lies in
  $[0,\frac{1}{2})$. Thus, our goal will be to learn a good
  approximation to each $\Di$ only above a price $p_\ibot$ such that
  there is at least a $\ibot$ probability of the winning bid being
  below $p_\ibot$.}

Following your astonishing success in recovering the bid distributions
of the advertisers, your boss has a follow-up task for you. Not all
items for sale, or users to which these ads are being shown, are
created equal, and the advertisers receive various attributes
describing the user (item for sale) before they submit their
bid. Those attributes may include geographic location, language,
operating system, browser, as well as highly sensitive data that might
be collected though cookies. Your boss asks you to recover how the
advertisers bid as a function of those vectors.

For this more challenging task, we can still help, under the
assumption that we have access to these attributes for the observed
auctions, under some assumptions. We start with the assumption that
each bidder uses a linear function of the attributes for his
bid. Namely, let $x$ be the attribute vector of the user, then each
advertiser has a weight vector $w_i$ and his bid is $x\cdot w_i$. For
this case we are able to recover efficiently an approximation
$\widehat{w}_i$ of the weight vectors $w_i$.

A related task is to assume that the value (or bid) of an advertiser
has a common shared component plus a private value which is
stochastic.  Namely, given a user with attributes $x$, the shared
value is $x\cdot w$, where the $w$ is the same to all advertisers, and
each advertiser draws a private value $v_i\sim \Di$.  The bid of
advertiser $i$ is $x\cdot w +v_i$ The goal is to recover both the
shared weights $w$ as well as the individual distributions.  We do
this by ``reduction'' to the case of no attributes, by first
recovering an approximation $\widehat{w}$ for $w$, and then using it
to compute the common value for each user $x$.

One last extension we can handle focuses on who participates in the
auction. So far, we assumed that in each auction, all the advertisers
participate. However, this assumption is not really needed. Our
approach is flexible enough, such that if we received for each auction
the participants, this will be enough to recover the bidding
distributions for each bidder who shows up often enough. Note that if
there are $n$ advertisers and each time a random subset shows up, we
are unlikely to see the same subset show up twice; we can learn about
bidder $i$'s distribution over bids even when she is never competing
in the same context, assuming her bid distribution does not depend on
who else is bidding.

\subsection{Related Work}

Problems of reconstructing distributional information from limited or
censored observations have been studied in both the medical statistics
literature and the manufacturing/operations research literature. In
medical statistics, a basic setting where this problem arises is
estimating survival rates (the likelihood of death within $t$ years of
some medical procedure), when patients are continually dropping out of
the study, independently of their time of death.  The seminal work in
this area is the Kaplan-Meier product-limit
estimator~\citep{kaplanmeier}, analyzed in the limit in the original
paper and then for finite sample sizes in \cite{foldes1981}, see also
its use in \cite{darkpools}.  In the manufacturing literature, this
problem arises when a device, composed of multiple components, breaks
down when the first of its components breaks down. From the statistics
of when devices break down and which components failed, the goal is to
reconstruct the distributions of individual component lifetimes
\citep{N70,M81}.  The methods developed (and assumptions made, and
types of results shown) in each literature are different.  In our
work, we will build on the approach taken by the Kaplan-Meier
estimator (described in more detail in Section
\ref{sec:second-reserve}), as it is more flexible and better suited to
the types of guarantees we wish to achieve, extending it and using it
as a subroutine for the kinds of weak observations we work with.

The area of prior-free mechanism design has aimed to understand what
mechanisms achieve strong guarantees with limited (or no) information
about the priors of bidders, particularly in the area of revenue
maximization. There is a large variety truthful mechanisms that guarantee a constant
approximation (see, cf, \cite{HartlineKarlin07}). A different direction is adversarial
online setting which minimize the regret with respect to the best single price (see, \cite{KleinbergL03}),
or minimizing the regret for the reserve price of
a second price auction \cite{cesa-bianchi:regret}. In
\cite{cesa-bianchi:regret} it was assumed that bidders have an identical
bid distribution and the algorithm observes the actual sell price
after each auction, and based on this the bidding distribution is approximated. 

A recent line of work tries to bridge between the Bayesian setting and the adversarial one,
by assuming we observe a limited number of samples.
For a regular distribution, as single sample bidders' distributions
is sufficient to get a $1/2$-approximation to the optimal
revenue~\citep{dhangwatnotai2010revenue}, which follows from an
extension of the \citet{bulow} result that shows the revenue from a
second-price auction with $n+1$ (i.i.d) bidders is higher than the
revenue from running a revenue-optimal auction with $n$ bidders. 
%
%
%
Recent work of~\citet{cole2014sample} analyzes the
number of samples necessary to construct a $1-\epsilon$-approximately
revenue optimal mechanism for asymmetric bidders: they show it is
necessary and sufficient to take $\poly\left(\frac{1}{\epsilon},
  n\right)$ samples from each bidder's distribution to construct an
$1-\epsilon$-revenue-optimal auction for bid distributions are strongly regular.
We stress that in this work we do not make {\em any} assumptions about the bid distribution.

\citet{chawla2014data} design mechanisms which are approximately
revenue-optimal and also allow for good \emph{inference}: from a
sample of bids made in Bayes-Nash equilibrium, they would like to
reconstruct the distribution over values from which bidders are
drawn. This learning technique relies heavily on a sample being drawn
\emph{unconditionally} from the \emph{symmetric} bid distribution,
rather than only seeing the \emph{winner's identity} from
\emph{asymmetric} bid distributions, as we consider in this work.

We stress that in all the ``revenue maximization'' literature has a
fundamentally different objective than the one in this paper.  Namely,
our goal is to reconstruct the bidders' bid distributions, rather than
focusing of the revenue directly.  Our work differs from previous work
in this space in that it assumes very limited observational
information. Rather than assuming all $n$ bids as an observation from
a single run of the auction, or even observing only the price, 
we see only the identity of highest bidder. We do
not need to make any regularity assumption on the bid distribution
(monotone hazard rate, regular, etc.), our methodology handles {\em
  any} continuous bid distribution.\footnote{Note that we measure the
  distance between two distributions using the total variation
  distance, which is essentially ``additive''.}

\section{Model and Preliminaries}\label{section:model}
We assume there are $n$ bidders, and each $i\in [n]$ has some unknown
valuation distribution $\Di$ over the interval $[0,1]$. Each sample
$t\in [m]$ refers to a fresh draw $\vit\sim\Di$ for each $i$. The
label of sample $t$ will be denoted $\yt = \argmax_{i}\vit$, the
identity of the highest bidder. Our goal is to estimate \Fi, the
cumulative distribution for \Di, for each bidder $i$, up to $\epsilon$
additive error for all values in a given range. In Section
\ref{sec:extensions} we examine extensions and modifications to this
basic model.

We consider the problem of finding (sample and computationally)
efficient algorithms for constructing an estimate $\Fei$ of $\Fi$, the
cumulative distribution function, such that for all bidders $i$ and
price levels $p$, $\Fei(p) \in \{\Fi(p) \pm \epsilon\}$.  However, as
discussed above, this goal is too ambitious in two ways.  First, if
the labels contain no information about the value of bids, the best we
could hope to learn is the relative probability each person might win,
which is insufficient to uniquely identify the CDFs, even without
sampling error.  We address this issue by allowing, at each time $t$,
our learning algorithm to insert a fake bidder $0$ (or reserve) of
value $v^t_0 = \rt$; the label at time $t$ will be $\yt = \argmax_{i}
\vit$ ($\yt = 0$ will refer to a sample where the reserve was not met,
or the fake bidder won the auction).  The other issue, also described
above, is that there will be values below which we simply cannot
estimate the $\D_i$ since bids below that value do not win.  In
particular, if bids below price $p$ never win, then any two
cumulatives $\Fi, \Fi'$ that agree above $p$ will be statistically
indistinguishable. Thus, we will consider a slightly weaker goal. We
will guarantee our estimates $\Fei(p)\in \Fi(p)\pm \epsilon$ for all
$p$ where $\prob[\textrm{someone winning with a bid at most $p$}] \geq
\gamma$. Then, our goal is to minimize $m$, the number of samples
necessary, to do so, and we hope to have $m \in poly(n,
\frac{1}{\epsilon}, \frac{1}{\ibot})$, with high probability of
success over the draw of the sample.  One final (and necessary)
assumption we will make is that each $\Di$ has no point masses, and
our algorithm will be polynomial in the maximum slope $L$ of the
$\Fi$s.\footnote{This assumption is useful for this paper for two
  reasons. First, it allows us to eschew any issues associated with
  tie-breaking, since they happen with probability 0. Second, if there
  were no continuity assumption, there might be point masses. If we
  wished to have an additive accuracy guarantee as above, this would
  force our learning algorithm to be able to determine the
  \emph{exact} location of these point masses, which couldn't be done
  in polynomial time (for example, suppose $F_i$ had a point mass at
  $\sqrt{2}$).}.

\subsection{A brief primer on the Kaplan-Meier
  estimator}\label{sec:kaplan}

Our work is closely related in spirit to that of the Kaplan-Meier
estimator, \km, for survival time; in this section, we describe the
techniques used for constructing the \km~\citep{kaplanmeier}. This
will give some intuition for the estimator we present in
Section~\ref{sec:second-reserve}. We translate the results found
in~\citet{kaplanmeier} to an auction setting from the survival rate
literature. Suppose each sample $t$ is of the following form. Each
bidder $i$ draws their bid $b^t_i\sim \Di$ independently of each other
bid. The label $\yt = (\max_i b^t_i, \argmax_i b^t_i)$ consists of the
winning bid and the identity of the winner. From this, we would like
to reconstruct an estimate $\Fei$ of $\Fi$. Given $m$ samples, relabel
them so that the winning bids are in increasing order, e.g. $b^1_{i_1}
\leq b^2_{i_2} \leq b^m_{i_m}$. Here is some intuition behind the \km:
$\prob[b_i \leq x] = \prob[b_i \leq x| b_i \leq y] \cdot \prob[b_i
\leq y]$ for $y > x$. Repeatedly applying this, we can see that, for
$x < y_1 < y_2 <\cdots < y_r$,
\begingroup
\everymath{\scriptstyle}
\normalsize
\begin{align}
\begin{split}
\Fi(x) & =   \prob[b_i \leq x] = \prob[b_i \leq x | b_i \leq y_1]\; \prob[b_i
  \leq y_1 | b_i \leq y_2 ] \cdots \prob[b_i
  \leq y_{r-1} | b_i \leq y_r ]\; \prob[b_i \leq y_r]  \\
  & = \prob[b_i \leq x | b_i \leq y_1]\;\prob[b_i \leq y_r] \;\prod_{t =
    1}^r \prob[b_i \leq y_t | b_i \leq y_{t+1}]
\end{split}
\label{eq:bayes}
\end{align}
\endgroup
Now, we can employ the observation in Equation~\ref{eq:bayes}, if only
we knew how to convert the samples into estimates of such conditional
probabilities. Since other players' bids are independent, we can
estimate the conditional probabilities as follows:
\begin{align}
  \prob\left[ b_i \leq b^t_{i_t} | b_i \leq b^{t+1}_{i_{t+1}}\right] &
  \approx \begin{cases}
    \frac{t-1}{t} & \textrm{ if $i$ won sample $t$} \\
    1 & \textrm{ if $j \neq i$ won sample $t$}\\
\end{cases}\label{eq:conditional}
\end{align}
\noindent Thus, combining Equations~\ref{eq:bayes}
and~\ref{eq:conditional}, we have the Kaplan-Meier estimator:

\begingroup
\everymath{\scriptstyle}
\small
\[\km(x) = \prod_{t : b^t_j \geq x}\left(\frac{t - 1}{t}\right)^{\ind[i\textrm{ won sample } t]}\]
\endgroup

\noindent Our estimator is morally similar to \km, though it differs
in several important ways. First, and most importantly, we do not see
the winning bid explicitly; instead, we will just have lower or upper
bounds on the highest non-reserve bid (namely, the reserve bid when
someone wins or we win, respectively). Secondly, \km\ generally has no
control issue; in our setting, we are \emph{choosing} one of the
values which will censor our observation. We need to pick appropriate
reserves to get a good estimator (picking reserves that are too high
will censor too many observations, only giving us uninformative upper
bounds on bids, and reserves that are too low will never win, giving
us uninformative lower bounds on bids). Our estimator searches the
space $[0,1]$ for appropriate price points to use as reserves to
balance these concerns.

\section{Learning bidders' valuation
  distributions}\label{sec:second-reserve}
In this section, we assume we have the power to insert a reserve
price, and observe who won. Using this, we would like to reconstruct
the CDFs of each bidder $i$ up to some error, down to some price
$\pbot{i}$ where $i$ has probability no more than $\ibot$ of winning
at or below $\pbot{i}$, up to additive accuracy \terr. Our basic plan
of attack is as follows. We start by estimating the probability $i$
wins with a bid in some range $[a, a+\delta]$, by setting reserve
prices at $a$ and $a+\delta$, and measuring the difference in
empirical probability that $i$ wins with the two reserves. We then
estimate the probability that no bidder bids above $a+\delta$ (by
setting a reserve of $a+\delta$ and observing the empirical
probability that no one wins). These together will be enough to
estimate the probability that $i$ wins with a bid in that range,
conditioned on no one bidding above the range. We then show, for a
small enough range, this is a good estimate for the probability $i$
bids in the range, conditioned on no one bidding above the
range. Then, we chain these estimates together to form \kaplan, our
estimator.

More specifically, to make this work we select a partition of $[0,1]$
into a collection of intervals. This partition should have the
following property. Within each interval $[x, y]$, there should be
probability at most $\beta$ of any person bidding in $[x,y]$,
conditioned on no one bidding above $y$. This won't be possible for
the lowest interval, but will be true for the other intervals. Then,
the algorithm estimates the probability $i$ will win in $[x,y]$
conditioned on all bidders bidding at most $y$. This then $(1-\beta)$
(multiplicatively) approximates the probability $i$ bids in $[x,y]$
(conditioned on all bidders bidding less than $y$). Then, the
algorithm combines these estimates in a way such that the
approximation factors do not blow up to reconstruct the CDF.

\begin{algorithm}\label{alg:kaplan}
\KwData{$\epsilon, \ibot, \delta, L$, where $L$ is the Lipschitz constant of the $F_i$s}
\KwResult{$\est{F}_i$}
Let $\est{F}_i(0) = 0$,  $\est{F}_i(1) = 1$,  $k = \frac{2Ln}{\beta\ibot} + 1$,  $\delta' = \frac{\delta}{3k(\log{k} + 1)}$, $\beta = \frac{\epsilon\ibot}{32 nL}$, $\add = \beta^2/96$,  $\mult = \beta/96$, $T =  \frac{8\ln 6/\delta'}{\alpha^2\gamma^2\left(\frac{\mu}{2}\right)^2}$\;
Let $\l{1}, \ldots, \l{k'} = \intervals(\beta,
\ibot, T)$\;
\For{$t= 2$ to $k'-1$} {
  Let $r_{\l{\tau}, \l{\tau+1}} = \iwin(i, \l{\tau}, \l{\tau+1}, T)$\;
}
\For{$t= 2$ to $k' - 1$} {
  Let $\est{F}_i(\l{\tau}) = \prod_{\tau'\geq t+1}(1- r_{\l{\tau'}, \l{\tau'+1}})$\;
}
Define $\est{F}_i(x) = \max_{\l{\tau} \leq x}\est{F}_i(\l{\tau})$\;
\caption{\kaplan, estimates the CDF of $i$ from samples with reserves}
\end{algorithm}

\begin{thm}\label{thm:kaplan}
  With probability at least $1-\delta$, \kaplan outputs $\est{F}_i$, an estimate of $F_i$, with sample complexity
  \[m = O\left(\frac{n^8L^8\ln\frac{nL}{\epsilon\ibot}\left(\ln\frac{1}{\delta} + \ln\ln\frac{nL}{\epsilon\ibot}\right)}{\ibot^{10}\epsilon^6}\right)\]
\noindent   and, for all $p$ where $\prob[\exists~j \textrm{s.t. $j$ wins with a
    bid }\leq p] \geq \ibot$, if each CDF is $L$-Lipschitz, the error
  is at most:

\[F_i(p) - \epsilon \leq \est{F}_i(p) \leq F_i(p) + \epsilon.\]
\end{thm}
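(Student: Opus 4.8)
The plan is to isolate three ingredients and then combine them. \emph{(i) Partition quality:} I would first show that the intervals $\l{1}=0<\cdots<\l{k'}=1$ returned by \intervals\ number at most $k'\le k = O(nL/(\beta\ibot))$, and that every one of them except the lowest is ``$\beta$-thin'', i.e.\ $\prob[\exists j : b_j \in [\l{\tau},\l{\tau+1}] \mid \text{all bids}\le\l{\tau+1}] \le \beta$. \emph{(ii) Reserve-probability estimation:} I would show that, with probability $1-\delta'$, the output $r_{\l{\tau},\l{\tau+1}}$ of \iwin\ is within additive $\add$ and multiplicative $\mult$ of $r^*_\tau := \prob[\,i\text{ wins with a bid in }[\l{\tau},\l{\tau+1}] \mid \text{all bids}\le\l{\tau+1}\,]$. \emph{(iii) Chaining:} I would use a telescoping identity (the analogue of Equation~\ref{eq:bayes}) to show that the product formed in \kaplan\ reconstructs $F_i$ without amplifying the per-interval errors. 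The sample-complexity bound then follows by a union bound over the $O(k\log k)$ reserve-probability estimates performed in total (forcing $\delta' = \delta/(3k(\log k+1))$), multiplying the number of estimates by the per-estimate cost $T$, and substituting $\beta = \epsilon\ibot/(32 nL)$, $\mult = \beta/96$, $\add = \beta^2/96$.

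For (i): since each $F_j$ is $L$-Lipschitz, any sub-interval of width at most $\beta\ibot/(nL)$ carries at most $\beta\ibot/n$ mass under $D_j$; and for any $p$ in the target regime we have $\prod_j F_j(p) = \prob[\text{all bids}\le p] \ge \ibot$, so conditioning on $\{\text{all bids}\le\l{\tau+1}\}$ inflates probabilities by at most $1/\ibot$, and a union bound over the $n$ bidders yields conditional interval mass at most $\beta$. Hence $O(nL/(\beta\ibot))$ intervals cover everything above the threshold, matching $k$; the single exception is the lowest interval, below which $\prod_j F_j < \ibot$ so the conditioning is unsafe. I would also have to argue that \intervals\ can actually find such a partition from samples — it needs empirical estimates of $\prob[\text{all bids}\le y]$ and of interval masses to decide where to cut, and a doubling/binary search over candidate endpoints is what contributes the $\log k$ factor. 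For (ii): \iwin\ estimates $r^*_\tau$ as $\big(\widehat{\prob}[i\text{ wins, reserve }\l{\tau}] - \widehat{\prob}[i\text{ wins, reserve }\l{\tau+1}]\big)$ divided by $\widehat{\prob}[\text{reserve }\l{\tau+1}\text{ wins}]$, an estimate of $\prob[\text{all bids}\le\l{\tau+1}] \ge \ibot$; a Chernoff bound with $T = \Theta(\ln(1/\delta')/(\add^2\ibot^2\mult^2))$ draws gives the claimed accuracy, with the $\ibot^2$ accounting for the denominator being bounded away from $0$.

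The heart of (iii) is the structural fact that, conditioned on every bid being at most $y=\l{\tau+1}$, the bids remain independent, and the only way $b_i\in[\l{\tau},y]$ can fail to make $i$ the winner is for some other bidder's bid to also fall in $[\l{\tau},y]$ — a $\beta$-thin event. This gives $(1-\beta)\,q_\tau \le r^*_\tau \le q_\tau$ where $q_\tau = \prob[b_i\in[\l{\tau},\l{\tau+1}]\mid b_i\le\l{\tau+1}] = 1 - F_i(\l{\tau})/F_i(\l{\tau+1})$, and in particular $q_\tau \le \beta$. Telescoping $F_i(\l{\tau'})/F_i(\l{\tau'+1}) = 1 - q_{\tau'}$ with $F_i(\l{k'})=1$ yields $F_i(\l{\tau+1}) = \prod_{\tau'\ge\tau+1}(1-q_{\tau'})$, while \kaplan\ outputs $\est{F}_i(\l{\tau}) = \prod_{\tau'\ge\tau+1}(1-r_{\l{\tau'},\l{\tau'+1}})$. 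I would bound each factor ratio $(1-r_{\l{\tau'},\l{\tau'+1}})/(1-q_{\tau'})$ inside $1 \pm O(\beta^2 + \mult\beta + \add)$ — the denominator is at least $1-\beta$, so the additive slack $\add$ contributes only $O(\add)$, and the one-sided bias $r^*_{\tau'}\ge(1-\beta)q_{\tau'}$ together with $q_{\tau'}\le\beta$ contributes only $O(\beta^2)$ — and then multiply at most $k$ of them, using $1+x\le e^x$, to get a product ratio of $1 \pm O\big((nL/\ibot)(\beta + \mult + \add/\beta)\big)$, which is $1\pm O(\epsilon)$ for the chosen parameters. Since $F_i\le 1$ this is an additive $O(\epsilon)$ error on $F_i(\l{\tau+1})$; finally $|F_i(\l{\tau+1}) - F_i(p)| \le \beta \le \epsilon$ for the largest endpoint $\l{\tau}\le p$ (using $\beta$-thinness of that interval, including the top one, which forces $F_i(\l{k'-1})\ge 1-\beta$), and the outer $\max$ over endpoints absorbs both the off-by-one-interval shift and any non-monotonicity. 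Shrinking the hidden constants in $\beta,\mult,\add$ brings the total below $\epsilon$.

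I expect the main obstacle to be the error bookkeeping in (iii): one has to be careful that the additive estimation error $\add$ is always divided by $1-q_{\tau'}\ge 1-\beta$ rather than by $q_{\tau'}$ (which is tiny near the ends of $i$'s support), and that the single-sided $(1-\beta)$ bias from relating winning to bidding accumulates in only one direction, so that a $k$-fold product of factors each within $1\pm O(\beta)$ of correct stays within $1\pm O(k\beta) = 1\pm O(\epsilon)$ rather than decaying to a constant. A secondary obstacle is making the analysis of \intervals\ rigorous — it is itself a sampling procedure whose internal estimates must be folded into the union bound — together with checking the edge cases where $p$ lies in the lowest or highest interval.
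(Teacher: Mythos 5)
Your proposal follows essentially the same route as the paper: your ingredient (i) is the paper's Lemma~\ref{lem:intervals}, (ii) is Lemma~\ref{lem:iwin}, and (iii) combines Observation~\ref{obs:kaplan} with Lemma~\ref{lem:winbid} and the same per-interval error accounting (each factor off by $O(\beta^2+\mult\beta+\add)$, multiplied over $k$ intervals), with the identical union bound and parameter substitution giving the sample complexity. The bookkeeping details you flag (dividing by $1-q_{\tau'}\ge 1-\beta$, lattice fineness, folding \intervals's internal estimates into the union bound) are exactly the steps the paper carries out, so no substantive difference.
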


\kaplan calls several other functions, which we will now informally
describe, and state several Lemmas describing their guarantees (the
formal definitions can be found in Figure~\ref{fig:helpers} and the
proofs can be found in Appendix~\ref{sec:inequalities}).  \iwin
estimates the probability $i$ wins in the region $[\l{\tau},
\l{\tau+1}]$, conditioned on all bids being at most $\l{\tau+1}$.
\intervals partitions $[0,1]$ into small enough intervals such that,
conditioned on all bids being in or below that interval, the
probability of any bidder bidding within the interval is small.
(Essentially, $\l{2}$ is $p_\gamma$, and therefore we are not interested
in the estimation in $[0,\l{2}]$, and by definition $\l{1}=0$.)

Here are three lemmas which will be useful in the proof of
Theorem~\ref{thm:kaplan}. Lemma~\ref{lem:iwin} bounds the number of
samples \iwin uses and bounds the error of its
estimate. Lemma~\ref{lem:intervals} does similarly for \intervals.
Lemma~\ref{lem:winbid} states that, if a region $[\l{\tau},
\l{\tau+1}]$ is small enough, the probability that $i$ bids in
$[\l{\tau}, \l{\tau+1}]$ (conditioned on all bids being at most
$\l{\tau+1}$) is well-approximated by the probability that $i$ wins
with a bid in $[\l{\tau}, \l{\tau+1}]$ (conditioned on all bids being
at most $\l{\tau+1}$). In combination, these three imply a guarantee
on the sample complexity and accuracy of estimating
$\cwinin{i}{\l{\tau}}{\l{\tau+1}}$, which is the key ingredient of the
\kaplan estimator.

\begin{lemma}\label{lem:iwin}
  Suppose, for a fixed interval $[\ljo, \lj]$,
  $\prob[\winin{i}{0}{\lj}]\geq \ibot$. Then, \iwin($i$, $\ljo, \lj,
  T$) outputs $p^i_{\ljo, \lj}$ such that

\[ (1-\mult)\cwinin{i}{\ljo}{\lj} - \add \leq p^i_{\ljo, \lj} \leq (1+\mult)\cwinin{i}{\ljo}{\lj} + \add,\]

\noindent with probability at least $1-3\delta'$ and uses $3T$ samples, for the
values of $T, \delta'$ as in \kaplan.
\end{lemma}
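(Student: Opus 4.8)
The plan is to unpack what $\iwin(i,\ljo,\lj,T)$ actually computes and then push Hoeffding-style concentration through a ratio of empirical frequencies. Concretely, $\iwin$ draws $T$ fresh samples in which we insert the reserve $\ljo$ and forms $\hat a$, the empirical frequency that $i$ wins; this estimates $a := \prob[b_i \ge \ljo \text{ and } b_i \ge b_j \text{ for all } j]$. It then draws $T$ samples with reserve $\lj$, forming $\hat b$, the empirical frequency that $i$ wins (estimating $b := \prob[b_i \ge \lj \text{ and } b_i \ge b_j \text{ for all } j]$), and $T$ further samples with reserve $\lj$, forming $\hat c$, the empirical frequency that \emph{nobody} wins (estimating $c := \prob[\max_j b_j \le \lj]$); this accounts for the $3T$ samples. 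Its output is $p^i_{\ljo,\lj} := (\hat a - \hat b)/\hat c$. The two identities to record are that $a - b = \prob[\winin{i}{\ljo}{\lj}]$ — since by continuity $b_i$ hits an endpoint with probability $0$, the difference of the two ``$i$ wins'' probabilities is exactly the probability $i$ wins with a bid in $[\ljo,\lj]$ — and $\cwinin{i}{\ljo}{\lj} = (a-b)/c$; so $p^i_{\ljo,\lj}$ is precisely the plug-in estimator of $\cwinin{i}{\ljo}{\lj}$.

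For the concentration step, each of $\hat a,\hat b,\hat c$ is an average of $T$ i.i.d.\ indicator variables, so Hoeffding's inequality plus a union bound gives that, with probability at least $1-3\delta'$, all three deviations $|\hat a-a|,|\hat b-b|,|\hat c-c|$ are at most some $\eta = O(\sqrt{\ln(1/\delta')/T})$; substituting $T = \frac{8\ln(6/\delta')}{\add^2\ibot^2(\mult/2)^2}$ as chosen in \kaplan makes $\eta$ comfortably below $\frac{\add\ibot}{4}$ (which, using $\add\le 2\mult$, is also at most $\frac{\mult\ibot}{2}$). Two structural facts then feed in. First, the hypothesis $\prob[\winin{i}{0}{\lj}] \ge \ibot$, together with the fact that ``$i$ wins in $[0,\lj]$'' forces $\max_j b_j = b_i \le \lj$, gives $c \ge \ibot$, so $\hat c \ge c-\eta \ge c/2 \ge \ibot/2 > 0$. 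Second, the event defining $a-b$ likewise forces $b_i$ to be the top bid and at most $\lj$, so $a-b\le c$ and hence $\cwinin{i}{\ljo}{\lj} = (a-b)/c \in [0,1]$.

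The real work — and the step to be careful about — is propagating the errors through the ratio. Write $\hat a - \hat b = (a-b)+e_1$ with $|e_1|\le 2\eta$ (it is a difference of two independent estimates, which is where the factor $2$ enters) and $\hat c = c+e_2$ with $|e_2|\le\eta$, and expand
\[
\frac{(a-b)+e_1}{c+e_2}-\frac{a-b}{c} \;=\; \frac{e_1}{c+e_2}-\frac{a-b}{c}\cdot\frac{e_2}{c+e_2}.
\]
Using $|c+e_2|\ge c/2\ge\ibot/2$, the first term is at most $\frac{2|e_1|}{\ibot}\le\frac{4\eta}{\ibot}\le\add$ in absolute value (this supplies the additive slack), while the second term equals $\cwinin{i}{\ljo}{\lj}$ times a factor of absolute value at most $\frac{2|e_2|}{\ibot}\le\frac{2\eta}{\ibot}\le\mult$ (this supplies the multiplicative slack). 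Adding the two bounds shows the total deviation of $p^i_{\ljo,\lj}$ from $\cwinin{i}{\ljo}{\lj}$ is at most $\mult\cdot\cwinin{i}{\ljo}{\lj}+\add$, which is exactly the two-sided estimate claimed, holding on the $1-3\delta'$ event. The only genuine obstacle is keeping the denominator bounded away from $0$ — which is exactly what the hypothesis $\prob[\winin{i}{0}{\lj}]\ge\ibot$ provides — together with bookkeeping the factor-$2$ losses; the rest is arithmetic with the constants $\add,\mult,T$ already fixed in \kaplan.
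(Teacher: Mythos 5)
Your proposal is correct and follows essentially the same route as the paper: you identify the output of \iwin as the plug-in estimator for $\cwinin{i}{\ljo}{\lj}=\frac{q^i_{\ljo,1}-q^i_{\lj,1}}{q^\bot_{\lj}}$, apply Chernoff/Hoeffding bounds with a union bound over the three empirical quantities, and use the hypothesis $\prob[\winin{i}{0}{\lj}]\geq\ibot$ to lower-bound the denominator by $\ibot$, exactly as in the paper's proof. The only (cosmetic) difference is that you propagate the error through the ratio via the exact identity $\frac{(a-b)+e_1}{c+e_2}-\frac{a-b}{c}=\frac{e_1}{c+e_2}-\frac{a-b}{c}\cdot\frac{e_2}{c+e_2}$, whereas the paper runs separate upper- and lower-bound inequality chains using Fact~\ref{fact:divmult}; both yield the stated multiplicative-plus-additive guarantee.
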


\begin{lemma}\label{lem:intervals}
Let $T$ as in \kaplan. Then, $\intervals(\beta, \ibot, T,
  L, n)$ returns $0 = \l{1} < \cdots < \l{k} = 1$ such that
\begin{enumerate}
\item $k\leq \frac{48Ln}{\beta\ibot}$
\item For each $\tau\in [2, k]$,
$\prob[\max_{j}b_j\in [\l{\tau}, \l{\tau+1}]|\max_{j}b_j \leq \l{\tau+1}] \leq \frac{\beta}{16}$
\item $\prob[\max_{j}b_j \in [\l{1}, \l{2}]]\leq \ibot$
\end{enumerate}

\noindent with probability at least $1-3k\log(k)\delta'$, when bidders' CDFs
are $L$-Lipschitz, using at most $3kT\log{k}$ samples.
\end{lemma}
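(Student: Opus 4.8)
The plan is to analyze \intervals as a top-down construction that places the breakpoints $1 = \l{k} > \cdots > \l{2} > \l{1} = 0$ one at a time, locating each new one by a binary search over a fixed grid, and to invoke Lemma~\ref{lem:iwin} as a black box for every estimate it makes (see Figure~\ref{fig:helpers}). Write $F(x) = \prod_j F_j(x)$ for the CDF of $\max_j b_j$; since each $F_j$ is $L$-Lipschitz, $F$ is $nL$-Lipschitz, and the quantity in property~2 is $g_y(x) := \prob[\max_j b_j \in [x,y] \mid \max_j b_j \le y] = 1 - F(x)/F(y)$, which is continuous and nonincreasing in $x$ with $g_y(y)=0$ and $g_y(0)=1$. \intervals works with a grid of resolution $\rho = \beta\ibot/(48nL)$: given the current right endpoint $y$ (initially $1$), it binary-searches over the grid points in $[0,y)$ to find, to within one cell, where the estimated conditional probability $\widehat{g}_y(x)$ — produced by an \iwin-style call using $T$ samples, hence accurate to the two-sided error of Lemma~\ref{lem:iwin} — crosses the target $\beta/32$; it sets $\l\tau$ to the smallest grid point $x < y$ with $\widehat g_y(x) \le \beta/32$, or, if no such point exists, to the grid point one cell below $y$. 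It recurses on $y \leftarrow x$ and stops (setting this last breakpoint as $\l{2}$ and $\l{1}=0$) once the empirical ``no-one-wins'' frequency at that reserve falls to $\ibot$ or below.

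First I would fix the good event: over the whole run \intervals issues at most $k\log k$ \iwin-style calls (at most $k$ breakpoints, binary-search depth $O(\log(1/\rho)) = O(\log k)$), each using $3T$ samples and failing with probability at most $3\delta'$ by Lemma~\ref{lem:iwin}; a union bound yields the claimed $1 - 3k\log(k)\delta'$ success probability and $3kT\log k$ sample count. Conditioned on this event, property~2 splits into two cases. If the search returns a grid point $x$ with $\widehat g_y(x) \le \beta/32$, then the Lemma~\ref{lem:iwin} guarantee with $\mult = \beta/96$ and $\add = \beta^2/96$ forces the true $g_y(x) \le \beta/16$. If instead it is forced to take the grid point one cell below $y$ (so the interval has width exactly $\rho$), then by Lipschitzness $g_y(x) \le nL\rho/F(y) \le nL\rho/\ibot \le \beta/16$, using that $F(y) \ge \ibot$ for every right endpoint above $\l{2}$ (the stopping rule has not yet fired there). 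Property~3, $F(\l{2}) \le \ibot$, follows from the stopping rule and the accuracy of the ``no-one-wins'' frequency estimate at that reserve, choosing the stopping threshold with enough slack that $\widehat F(\l{2}) \le \ibot$ implies $F(\l{2}) \le \ibot$.

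For property~1 I would observe that, by construction, consecutive breakpoints always differ by at least $\rho$ (the forced fallback is exactly one cell below $y$), so the number of breakpoints is at most $1/\rho = 48Ln/(\beta\ibot)$. (One can alternatively derive a bound of order $\log(1/\ibot)/\beta$ from the telescoping identity $\prod_{\tau\ge 2}(1 - g_{\l{\tau+1}}(\l\tau)) = F(\l{2})$ together with a lower bound $g_{\l{\tau+1}}(\l\tau) = \Omega(\beta)$ coming from the ``smallest grid point'' rule and $F(\l{2}) = \Omega(\ibot)$, but the resolution bound already suffices and is cleaner.) Since $\ibot\log(1/\ibot) = O(1) = O(Ln)$, the two bounds are consistent with the stated $48Ln/(\beta\ibot)$.

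The main obstacle is calibrating $\rho$ and the $\beta/32$ target against the two-sided error of Lemma~\ref{lem:iwin} so that the three requirements hold simultaneously: $\rho$ must be small enough that a forced minimal-width interval is still $\beta/16$-small via the Lipschitz bound ($nL\rho/\ibot \le \beta/16$), yet $1/\rho$ must stay within $48Ln/(\beta\ibot)$; and the conditional-probability estimate is essentially a ratio $1 - \widehat F(x)/\widehat F(y)$ whose error grows as $F(y)\downarrow\ibot$, so I must verify that the value of $T$ fixed in \kaplan (with its $\ibot^2$ and $\mult,\add$ factors) really does control this ratio to additive error below $\beta/64$ all the way down to the bottom interval — which is exactly what is needed to convert $\widehat g_y(x)\le\beta/32$ into $g_y(x)\le\beta/16$ and to keep the forced intervals well behaved. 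Everything else — the union bound, the telescoping, the Lipschitz estimates, and the termination of the recursion since $F(0)=0<\ibot$ — is routine once these parameters are pinned down.
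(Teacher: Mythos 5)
Your proposal is correct and follows essentially the same route as the paper's proof: accept a breakpoint when the estimated conditional mass falls below a $\Theta(\beta)$ threshold and convert that to the $\beta/16$ guarantee via the estimator's $(\mult,\add)$ accuracy (valid because $\prob[\max_j b_j\le \l{\tau+1}]\ge\ibot$ above $\l{2}$), use the $L$-Lipschitz bound to show intervals cannot be narrower than $\Theta(\beta\ibot/(nL))$ (bounding both $k$ and the binary-search depth), get property~3 from the $\ibot/2$ stopping rule plus a Chernoff bound, and finish with the union bound over roughly $3k\log k$ estimates of $T$ samples each. The only mismatches are cosmetic: the estimates inside \intervals are made by \inside, whose guarantee is Corollary~\ref{cor:inside} (via Lemma~\ref{lem:conditional}) rather than Lemma~\ref{lem:iwin}, and you analyze a fixed-grid variant of the search in which the minimum width holds by construction, whereas for the pseudocode in Figure~\ref{fig:helpers} the minimum width is exactly the Lipschitz acceptance argument you already supply for your fallback case.
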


With the guarantee of Lemma~\ref{lem:intervals}, we know that the
partition of $[0,1]$ returned by \intervals is ``fine enough''. Now,
Lemma~\ref{lem:winbid} shows that, when the partition fine enough, the
conditional probability $i$ wins with a bid in each interval is a good
estimate for the conditional probability $i$ bids within that
interval.

\begin{lemma}\label{lem:winbid}
  Suppose that, for bidder $i$ and some $ 0 \leq \ljo \leq \lj \leq
  1$,

  \[\prob[\max_{j\neq i} b_j \in[\ljo, \lj] | \max_{j\neq i} b_j <
  \lj] \leq \beta.\]
\noindent Then,

\[ 1 \geq \frac{\prob[\winin{i}{\ljo}{\lj}|\allbidless{\lj}]}{\prob[\bidin{i}{\ljo}{\lj}|\allbidless{\lj}]} \geq 1-\beta \]
\end{lemma}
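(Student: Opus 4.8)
The plan is to move into the conditional probability space defined by the conditioning event $\allbidless{\lj}$ and then exploit the independence of $b_i$ from the other bids. Write $M=\max_{j\neq i}b_j$ for the highest competing bid. Since the $b_j$'s are mutually independent, the event $\allbidless{\lj}$ factors as $\{b_i<\lj\}\cap\{M<\lj\}$, a product event; hence conditioning on it leaves $b_i$ (now supported on $[0,\lj)$) independent of $M$ (now conditioned to lie in $[0,\lj)$). Throughout I will use the no-point-mass assumption to ignore the measure-zero distinctions between $[\ljo,\lj)$ and $[\ljo,\lj]$, and between $\{b_i<\lj\}$ and $\{b_i\le\lj\}$. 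In this conditioned space, the event ``\bidin{i}{\ljo}{\lj}'' is simply $\{b_i\ge\ljo\}$ (as $b_i<\lj$ always holds), while ``\winin{i}{\ljo}{\lj}'' is $\{b_i\ge\ljo\}\cap\{b_i>M\}$, since $i$ wins exactly when $b_i$ exceeds every competing bid.

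The upper bound is then immediate: ``\winin{i}{\ljo}{\lj}'' is contained in ``\bidin{i}{\ljo}{\lj}'', so the ratio of their (equally) conditioned probabilities is at most $1$. For the lower bound, I would consider the ``bad'' outcomes in which $i$ bids in $[\ljo,\lj]$ but does not win, namely the event $\{b_i\ge\ljo,\ b_i\le M\}$. Since $b_i\ge\ljo$ and $b_i\le M$ force $M\ge\ljo$, this bad event is contained in $\{b_i\ge\ljo\}\cap\{M\ge\ljo\}$, whose (conditioned) probability factors as $\prob[b_i\ge\ljo\mid\allbidless{\lj}]\cdot\prob[M\ge\ljo\mid\allbidless{\lj}]$ by the independence established above. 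Subtracting, $\prob[\winin{i}{\ljo}{\lj}\mid\allbidless{\lj}]\ge\prob[b_i\ge\ljo\mid\allbidless{\lj}]\,(1-\prob[M\ge\ljo\mid\allbidless{\lj}])$, and dividing through by $\prob[\bidin{i}{\ljo}{\lj}\mid\allbidless{\lj}]=\prob[b_i\ge\ljo\mid\allbidless{\lj}]$ shows the ratio is at least $1-\prob[M\ge\ljo\mid\allbidless{\lj}]$.

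Finally I would bound $\prob[M\ge\ljo\mid\allbidless{\lj}]$. Because $\{M\ge\ljo\}$ depends only on the bids $b_j$ with $j\neq i$, and $b_i$ is independent of these, conditioning further on $b_i<\lj$ changes nothing, so this quantity equals $\prob[\max_{j\neq i}b_j\in[\ljo,\lj)\mid\max_{j\neq i}b_j<\lj]$, which (up to the measure-zero cleanup) is $\prob[\max_{j\neq i}b_j\in[\ljo,\lj]\mid\max_{j\neq i}b_j<\lj]\le\beta$ by hypothesis. Combining the two bounds gives the ratio in $[1-\beta,1]$. I do not expect a genuine obstacle here; the only point requiring care is keeping the two conditioning events straight — $\allbidless{\lj}$ versus $\max_{j\neq i}b_j<\lj$ — and noting that the hypothesis is stated with exactly the latter, which is precisely what the product-factorization of $\allbidless{\lj}$ hands us, so no loss is incurred.
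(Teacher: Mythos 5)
Your proposal is correct and is essentially the paper's argument: both hinge on the product structure of the conditioning event $\allbidless{\lj}$, the independence of $b_i$ from $\max_{j\neq i}b_j$, and the containment of the win event in the bid event. The paper lower-bounds $\prob[\winin{i}{\ljo}{\lj}\mid C, \bidin{i}{\ljo}{\lj}]$ by $\prob[\max_{j\neq i}b_j<\ljo\mid C]\geq 1-\beta$, which is just the complementary phrasing of your bound on the ``bid but lose'' event, so there is no substantive difference.
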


Finally, we observe that $F_i$ can be written as the product of
conditional probabilities

\begin{observation}\label{obs:kaplan}
  Consider some set of points $0 < \l{1} < \ldots < \l{k} = 1$.
  $F_i(\l{\tau})$ can be rewritten as the following product:
\[ F_i(\l{\tau-1}) = F_i(\l{\tau})(1-\prob[b_i \geq \l{\tau-1} | b_i \leq \l{\tau}]) = \prod_{\tau' \geq t} (1-\prob[b_i\geq\l{\tau'-1}| b_i \leq \l{\tau'}]) =\prod_{\tau' \geq t} (1-\prob[b_i\in[\l{\tau'-1}, \l{\tau'}]| b_i \leq \l{\tau'}]) \]
\end{observation}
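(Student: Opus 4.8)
The plan is to prove this as a straightforward telescoping identity, exactly in the spirit of Equation~\ref{eq:bayes}. The single building block is the elementary conditioning step: for any $x < y$ (with $y$ large enough that $F_i(y) = \prob[b_i \le y] > 0$, so that the conditioning event is nonempty),
\[
  F_i(x) = \prob[b_i \le x] = \prob[b_i \le x \mid b_i \le y]\cdot\prob[b_i \le y] = F_i(y)\cdot\bigl(1 - \prob[b_i \ge x \mid b_i \le y]\bigr),
\]
where the last equality uses that $\{b_i \le x\}$ and $\{b_i \ge x\}$ partition $\{b_i \le y\}$ up to the event $\{b_i = x\}$, which has probability zero because $\Di$ has no point masses. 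Applying this with $x = \l{\tau-1}$ and $y = \l{\tau}$ gives the first displayed equality. (If instead $F_i(\l{\tau}) = 0$ then $F_i(\l{\tau-1}) = 0$ as well and every stated expression is trivially $0$, so we may assume $F_i(\l{\tau}) > 0$.)

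Next I would iterate the identity up the chain of points. Having written $F_i(\l{\tau-1}) = F_i(\l{\tau})\bigl(1 - \prob[b_i \ge \l{\tau-1} \mid b_i \le \l{\tau}]\bigr)$, I apply the same step to $F_i(\l{\tau})$ with $y = \l{\tau+1}$, then to $F_i(\l{\tau+1})$, and so on, terminating when the factor $F_i(\l{k})$ is reached. Since $\l{k} = 1$ and $F_i(1) = 1$, this factor equals $1$ and drops out, leaving the telescoping product $\prod_{\tau' \ge \tau}\bigl(1 - \prob[b_i \ge \l{\tau'-1}\mid b_i \le \l{\tau'}]\bigr)$, which is the second displayed equality.

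Finally, for the last equality I would observe that the events $\{b_i \ge \l{\tau'-1}\}$ and $\{b_i \le \l{\tau'}\}$ intersect in $\{b_i \in [\l{\tau'-1}, \l{\tau'}]\}$, so conditioned on $b_i \le \l{\tau'}$ the events $\{b_i \ge \l{\tau'-1}\}$ and $\{b_i \in [\l{\tau'-1},\l{\tau'}]\}$ coincide and hence have equal probability; substituting this into each factor yields the claimed form. I do not anticipate any real obstacle here — the statement is a purely algebraic consequence of Bayes' rule. The only points needing (minor) care are the appeal to the no-point-mass assumption when replacing $\{b_i \le x\}^c \cap \{b_i \le y\}$ by $\{b_i \ge x\} \cap \{b_i \le y\}$, and the index bookkeeping at the top of the chain so that the $F_i(\l{k}) = F_i(1) = 1$ term is correctly absorbed into the product.
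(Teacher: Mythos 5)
Your proof is correct and follows essentially the same route the paper intends: the observation is stated without an explicit proof precisely because it is the Bayes-rule telescoping identity already sketched in Equation~(\ref{eq:bayes}) of the Kaplan--Meier primer, anchored by $F_i(\l{k}) = F_i(1) = 1$ and the no-point-mass assumption to equate $\{b_i > \l{\tau'-1}\}$ with $\{b_i \geq \l{\tau'-1}\}$ inside the conditioning. Your handling of the degenerate case $F_i(\l{\tau}) = 0$ and the index bookkeeping is a sensible tightening of what the paper leaves implicit.
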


With these pieces in place, we prove Theorem~\ref{thm:kaplan}.

\begin{proof*}[Proof of Theorem~\ref{thm:kaplan}]
  Notice that there are at most $k'$ events each of which happens with
  probability at most $\delta' = \frac{\delta}{k'}$ (namely, that
  \intervals returns a poor partition, or for each interval, of which
  there are at most $k'-1$, by Lemma~\ref{lem:intervals}, that \iwin
  is not accurate as described by Lemma~\ref{lem:iwin}). Thus, by a
  union bound, none of these events occur with probability
  $1-\delta$. Thus, for the remainder of the proof we assume the
  partition returned by \intervals is good and each call to \iwin is
  accurate.

  It will suffice to prove, for the lattice points in our
  discretization, that \kaplan provides an $\epsilon$-approximation to
  the CDF. This follows because
\begin{align*}
  F_i(\l{\tau}) - F_i(\l{\tau-1}) &= \prob[\bidin{i}{\l{\tau-1}}{\l{\tau}}]\\
  & = \prob[\bidin{i}{\l{\tau-1}}{\l{\tau}}|b_i \leq \l{\tau}]\\
  & \leq \prob[\bidin{i}{\l{\tau-1}}{\l{\tau}}|\max_j b_j \leq \l{\tau}]\\
  & \leq (1+\beta) \prob[\winin{i}{\l{\tau-1}}{\l{\tau}}|\max_j b_j \leq \l{\tau}]\\
  & \leq (1+\beta) \beta = \beta + \beta^2 \leq \frac{\epsilon}{2}
\end{align*}
\noindent where the third and fourth inequality follows from
Lemma~\ref{lem:intervals} and Lemma~\ref{lem:winbid}, and the final
one from the fact that $\beta < \frac{\epsilon}{4}$. Thus, our lattice is
fine enough that it suffices to show accuracy of the lattice
points. We start by rewriting $F_i(\l{\tau})$, using
Observation~\ref{obs:kaplan}:
\begin{align}
\begin{split}
  F_i(\l{\tau}) & = \prod_{\tau' \geq \tau + 1} (1-\prob[b_i\in[\l{\tau'-1} ,
  \l{\tau'}]\mid b_i\leq \l{\tau'}]).
\end{split}\label{eq:kp}
\end{align}
So, one can compute the probability of bidding at most $\l{\tau-1}$ by
multiplying together a collection of probabilities of bidding within
intervals above $\l{\tau}$.  Let the event $\max_j b_j \leq \l{\tau'}$
be denoted $M_{\l{\tau'}}$.  Now, we can apply
Lemma~\ref{lem:intervals} to imply that, for all $\tau'$,

\[\prob[\max_j b_j\in [\l{\tau'-1}, \l{\tau'}]| M_{\l{\tau'}}] \leq  \frac{\beta}{16} = \beta'\]

\noindent which, by Lemma~\ref{lem:winbid}, implies for all $\tau'$ that
\begin{align}
1 & \geq
 \frac{\prob[\winin{i}{\l{\tau'-1}}{\l{\tau'}}|M_{\l{\tau'}}]}{\prob[\bidin{i}{\l{\tau'-1}}{\l{\tau'}}|m_{\l{\tau'}}]} =  \frac{\prob[\winin{i}{\l{\tau'-1}}{\l{\tau'}}|M_{\l{\tau'}}]}{\prob[\bidin{i}{\l{\tau'-1}}{\l{\tau'}}|\bidin{i}{0}{\l{\tau'}}]} \geq 1-\beta' \label{eq:winbid}
\end{align}
\noindent where the equality comes from the independence of the bids. Then,
combining Equations~(\ref{eq:winbid}) and~(\ref{eq:kp}), we know
\begin{align*}
\begin{split}
&\prod_{\tau'\geq \tau + 1} \left(1 - \prob[\winin{i}{\l{\tau'-1}}{\l{\tau'}}|M_{\l{\tau'}}]\right) =  F_i(\l{\tau}) \geq \prod_{\tau'\geq \tau + 1} (1 - \frac{\prob[\winin{i}{\l{\tau'-1}}{\l{\tau'}}|M_{\l{\tau'}}]}{1-\beta'})
\end{split}\label{eq:bound}
\end{align*}
Then, by Fact~\ref{fact:divmult},

\[F_i(\l{\tau}) \in \left[
  \prod_{\tau' \geq \tau + 1}
  \left(1 -  (1+2\beta')\prob[\winin{i}{\l{\tau'-1}}{\l{\tau'}}|M_{\l{\tau'}}]\right),
\prod_{\tau' \geq \tau + 1}\left(1 - \prob[\winin{i}{\l{\tau'-1}}{\l{\tau'}}|M_{\l{\tau'}}]\right)\right]\]

\noindent Now, Lemma~\ref{lem:iwin} states that the result of \iwin are correct
within an additive \add and multiplicative \mult, thus
\begin{align*}
 &\prod_{\tau' \geq \tau + 1}(1 -
  (1+\mult)\prob[\winin{i}{\l{\tau'-1}}{\l{\tau'}}|M_{\l{\tau'}}] -\add) \leq \est{F}_i(\l{\tau}) \leq
\prod_{\tau' \geq \tau + 1}(1 -
  (1-\mult)\prob[\winin{i}{\l{\tau'-1}}{\l{\tau'}}|M_{\l{\tau'}}]+\add).
\end{align*}
Now, we simply need to look at the potential difference in these
terms. We will consider the lower bound on $F_i(\l{\tau})$ and upper
bound on $\est{F}_i(\l{\tau})$ (the other direction is analogous).
\begin{align*}
& \prod_{\tau' \geq \tau + 1}(1 -
  (1-\mult)\prob\left[\winin{i}{\l{\tau'-1}}{\l{\tau'}}|M_{\l{\tau'}}\right]+\add)  -
 \prod_{\tau' \geq t +1}(1 -
  (1+2\beta')\prob\left[\winin{i}{\l{\tau'-1}}{\l{\tau'}}|M_{\l{\tau'}}\right])\\
& \leq  \prod_{\tau' \geq \tau + 1}(1 -
  \prob\left[\winin{i}{\l{\tau'-1}}{\l{\tau'}}|M_{\l{\tau'}}\right]+\mult\beta'+\add)  -
 \prod_{\tau' \geq \tau + 1}(1 -
  \prob\left[\winin{i}{\l{\tau'-1}}{\l{\tau'}}|M_{\l{\tau'}}\right]-2\beta'^2)\\
& \leq  \prod_{\tau' \geq \tau + 1}(1 -
  \prob\left[\winin{i}{\l{\tau'-1}}{\l{\tau'}}|M_{\l{\tau'}}\right]+\beta'^2)  -
 \prod_{\tau' \geq \tau + 1}(1 -
  \prob\left[\winin{i}{\l{\tau'-1}}{\l{\tau'}}|M_{\l{\tau'}}\right]-2\beta'^2)\\
& \leq  \prod_{\tau' \geq \tau + 1}(1-2\beta'^2)(1 -
  \prob\left[\winin{i}{\l{\tau'-1}}{\l{\tau'}}|M_{\l{\tau'}}\right]) -
 \prod_{\tau' \geq \tau +1}(1+2\beta'^2)(1 -
  \prob\left[\winin{i}{\l{\tau'-1}}{\l{\tau'}}||M_{\l{\tau'}}\right])\\
& \leq (1-2\beta'^2)^k \prod_{\tau' \geq \tau + 1}(1 -
  \prob\left[\winin{i}{\l{\tau'-1}}{\l{\tau'}}|M_{\l{\tau'}}\right]) -
 (1+4\beta'^2)^k\prod_{\tau' \geq \tau + 1}(1 -
  \prob\left[\winin{i}{\l{\tau'-1}}{\l{\tau'}}|M_{\l{\tau'}}\right])\\
& \leq (1-4k\beta'^2) \prod_{\tau' \geq \tau + 1}(1 -
  \prob\left[\winin{i}{\l{\tau'-1}}{\l{\tau'}}|M_{\l{\tau'}}\right]) -
 (1+8k\beta'^2)\prod_{\tau' \geq t +1}(1 -
  \prob\left[\winin{i}{\l{\tau'-1}}{\l{\tau'}}|M_{\l{\tau'}}\right])\\
&\leq 12k\beta'^2 \leq 12\frac{16Ln}{\beta\ibot}\beta'^2 \leq \frac{3 Ln\beta}{\ibot}
\leq \frac{\epsilon}{2}
\end{align*}
\noindent where the first follows from
$\prob\left[\winin{i}{\l{\tau'-1}}{\l{\tau'}}|\allbidless{\l{\tau'}}\right]
\leq \beta'$, the second by the definition of $\add =
\frac{\beta'^2}{2}$, $\mult = \frac{\beta'}{2}$, the third again, by
$\prob\left[\winin{i}{\l{\tau'-1}}{\l{\tau'}}|M_{\l{\tau'}}\right]
\leq \beta'$, the fourth from $2\beta' < \frac{1}{2}$, the fifth and
sixth from basic algebra, the seventh by the bound on $k \leq \frac{16
  Ln}{\beta\ibot}$, by Lemma~\ref{lem:intervals}, the eighth by $\beta'
= \frac{\beta}{16}$, and the ninth by $\beta = \frac{\epsilon\ibot}{32
  nL}$.

The sample complexity bound and failure probability follow from
Lemmas~\ref{lem:intervals} and~\ref{lem:iwin}, substituting in for
various parameters, since \iwin is called $k$ times. Thus, in total,
there are $\leq 3k\log(k) + 3k$ empirical estimates made, each with
probability at most $\delta'$ of failure, each with sample size $T$.
\end{proof*}

\begin{figure}[!ht]
\begin{algorithm}[H]
\KwData{$ \ljo, \lj, T$}
\KwResult{$p^{\in}_{\ljo, \lj}$}
Let $S_1$ be a sample of size $T$ with reserve $\ljo$\;
Let $S_2$ be a sample of size $T$ with reserve $\lj$\;
Return $p^{\in}_{\ljo, \lj} =1 - \frac{\sum_{t\in S_2}\ind[\bot\textrm{ wins }t]}{\sum_{t\in S_1}\ind[\bot\textrm{ wins }t]}$\;
\caption{\inside, estimates $\prob[\max_j b_j \geq \ljo |\max_j b_j \leq \lj]$}\label{alg:inside}
\end{algorithm}

\begin{algorithm}[H]
  \KwData{$i$, $\ljo, \lj$, $T$}
  \KwResult{$p^i_{\ljo, \lj}$}
\caption{\iwin, Estimates $\prob[\winin{i}{\ljo}{\lj}|\max_{j} b_j < \lj]$}\label{alg:intervalest}
Let $S_\ljo$ be a sample with reserve $\lj$ of size $T$\;
Let $S_\lj$ be a sample with reserve $\ljo$ of size $T$\;
Let $S_{cond}$ be a sample with reserve $\lj$ of size $T$\;
Output $p^i_{\l{\tau}, \l{\tau+1}} = \frac{\sum_{t\in S_\ljo}\ind[i\textrm{ wins on sample }t] - \sum_{t\in S_\lj}\ind[i\textrm{ wins on sample }t]}{\sum_{t\in S_{cond}}\ind[\bot\textrm{ wins on sample }t]}$\;
\end{algorithm}

\begin{algorithm}[H]\label{alg:intervals}
\KwData{$\beta, \ibot, T, n, L$}
\KwResult{$0 = \l{1} < \ldots < \l{k} = 1$}
Let $\l{k} = 1$, $c= k$, $p^i_{\l{c}} = 1$\;
\While(\tcp*[h]{Do binary search for the bottom of the next interval})
{$p^i_{\l{c}}>\ibot/2$} {
  Let $\est{\l{b}} = 0$\;
  \While(\tcp*[h]{The interval is too large}){$\inside(\est{\l{b}}, \l{c}, T)   > \frac{\beta}{48}$} {
    $\est{\l{b}} = \frac{\l{c} + \est{\l{b}}}{2}$\;
  }
  $\l{c-1} = \est{\l{b}}$\;
  $c = c - 1$\;
  Let $S_1$ be a sample of size $T$ with reserve $\l{c-1}$\;
  $p_{\l{c}} = \frac{\sum_{t\in S_1} \ind[j\geq 1 \textrm{ wins on sample }t]}{T}$\;
}
Return $0, \l{c}, \ldots, \l{k}$\;
\caption{\intervals, finds a partition of the bid space into regions where we estimate $f_i$}
\end{algorithm}

\caption{Helper functions}\label{fig:helpers}
\end{figure}

\subsection{Subsets}
The argument above extends directly to a more general scenario in
which not all bidders necessarily show up each time, and instead there
is some distribution over $2^{[k]}$ over which bidders show up each
time the auction is run.  As mentioned above, this is quite natural in
settings where bidders are companies that may or may not need the
auctioned resource at any given time, or keyword auctions where there
is a distribution over keywords, and companies only participate in the
auction of keywords that are relevant to them.  To handle this case,
we simply apply Algorithm \ref{alg:kaplan} to just the subset of time
steps in which bidder $i$ showed up when learning $\est{F}_i$.  We use
the fact here that even though the distribution over subsets of
bidders that show up need not be a product distribution (e.g., certain
bidders may tend to show up together), the maximum bid value of the
other bidders who show up with bidder $i$ is a random variable that is
independent of bidder $i$'s bid.  Thus all the above arguments extend
directly.  The sample complexity bound of Theorem \ref{thm:kaplan} is
now a sample complexity on observations of bidder $i$ (and so requires
roughly a $1/q$ blowup in total sample complexity to learn the
distribution for a bidder that shows up only a $q$ fraction of the
time).

\newcommand{\distx}{\ensuremath{\mathcal{P}}\xspace}
\section{Extensions and Other Models}\label{sec:extensions}


So far we have been in the usual model of independent private values.
That is, on each run of the auction, bidder $i$'s value is $v_i \sim
\Di$, drawn independently from the other $v_j$.  We now consider
models motivated by settings where we have different items being
auctioned on each round, such as different cameras, cars, or laptops,
and these items have observable properties, or features, that affect
their value to each bidder.

In the first (easier) model we consider, each bidder $i$ has its own
private weight vector $w_i \in R^d$ (which we don't see), and each
item is a feature vector $x \in R^d$ (which we do see).  The value for
bidder $i$ on item $x$ is $w_i \cdot x$, and the winner is the highest
bidder $\argmax_i w_i \cdot x$.  There is a distribution $\distx$ over
items, but no additional private randomness.  Our goal, from
submitting bids and observing the identity of the winner, is to learn
estimates $\tilde{w}_i$ that approximate the true $w_i$ in the sense
that for random $x \sim \distx$, with probability $\geq 1-\epsilon$,
the $\tilde{w}_i$ correctly predict the winner and how much the winner
values the item $x$ up to $\pm \epsilon$.

In the second model we consider, there is just a single common vector
$w$, but we reintroduce the distributions $\Di$.  In particular, the
value of bidder $i$ on item $x$ is $w \cdot x + v_i$ where $v_i \sim
\Di$.  The ``$w \cdot x$'' portion can be viewed as a common value due
to the intrinsic worth of the object, and if $w = \vec{0}$ then this
reduces to the setting studied in previous sections.  The goal of the
algorithm is to learn both the common vector $w$ and all the $\Di$.

The common generalization of the above two models, with different
unknown vectors $w_i$ and unknown distributions $\Di$ appears to be
quite a bit more difficult (in part because the expected value of a
draw from $\Di$ conditioned on bidder $i$ winning depends on the
vector $x$).  We leave as an open problem to resolve learnability
(positively or negatively) in such a model.  We assume that $\|x\|_2
\leq 1$ and $\|w_i\|_2\leq 1$, and as before, all valuations are in
$[0,1]$.

\subsection{Private value vectors without private randomness}
Here we present an algorithm for the setting where each bidder $i$ has
its own private vector $w_i \in R^d$, and its value for an item $x \in
R^d$ is $w_i \cdot x$.  There is a distribution $\distx$ over items,
and our goal, from submitting bids and observing the identity of the
winner, is to accurately predict the winner and the winning bid.
Specifically, we prove the following:

\begin{thm}
With probability $\geq 1-\delta$, the algorithm below using sample size

\begingroup
\everymath{\scriptstyle}
\small
$$m = O\left(\frac{1}{\epsilon^2}\left[dn^2 \log(1/\epsilon) + \log(1/\delta)\right]\right)
$$
\endgroup
produces $\tilde{w}_i$ such that on a $1-\epsilon$ probability mass of
$x \sim \distx$ we have $i^* \equiv \argmax_i \tilde{w}_i \cdot x =
\argmax_i w_i \cdot x$ (i.e., a correct prediction of the winner), and
additionally have $|\tilde{w}_{i^*} \cdot x - w_{i^*} \cdot x| \leq
\epsilon$.
\end{thm}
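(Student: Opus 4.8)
The plan is to reduce the whole task to a single, clean realizable PAC problem by \emph{augmenting} the feature vector with the reserve coordinate, solve that problem by linear programming, and then convert the resulting outcome-prediction error into the claimed winner/value guarantee using the fact that the reserve is drawn uniformly.

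\emph{The algorithm and the reduction.} For $t=1,\dots,m$, draw $x^t\sim\distx$, submit a reserve $r^t$ uniform on $[0,1]$, and observe the outcome $o^t\in\{0,1,\dots,n\}$ ($o^t=0$ meaning the reserve won). The key observation is that the outcome is exactly the $\argmax$ of $n+1$ linear functionals of the augmented point $(x,r)\in\mathbb{R}^{d+1}$: the reserve is the (known) functional $(x,r)\mapsto r$, and bidder $i$ is $(x,r)\mapsto w_i\cdot x$. Hence every observation imposes linear inequalities on the unknowns $w_1,\dots,w_n$: if $o^t=i\ge 1$ then $\tilde w_i\cdot x^t\ge r^t$ and $\tilde w_i\cdot x^t\ge \tilde w_k\cdot x^t$ for all $k$, and if $o^t=0$ then $r^t\ge \tilde w_k\cdot x^t$ for all $k$. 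The true vectors are a feasible point (with $\|w_i\|_2\le 1$), so the algorithm finds, by linear programming (with a slack to handle strictness), some $\tilde w_1,\dots,\tilde w_n$ with $\|\tilde w_i\|_2\le1$ consistent with all $m$ observations. Ties between two functionals occur with probability $0$ over the joint draw (we may assume $\distx$ has no atoms, or perturb), so any such $\tilde w$ induces an outcome function agreeing with every observed $o^t$.

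\emph{Generalization.} Let $h_{\tilde w}(x,r)=\argmax\{\,r,\tilde w_1\cdot x,\dots,\tilde w_n\cdot x\,\}$ and let $h_w$ denote the true outcome function. Predicting $h_{\tilde w}(x,r)$ correctly amounts to answering all $\binom{n+1}{2}$ pairwise comparisons between the functionals correctly, and each such comparison is a homogeneous halfspace in $\mathbb{R}^{d+1}$; a standard argument thus bounds the (Natarajan) dimension of $\{h_{\tilde w}\}$ by $O(dn^2)$. The realizable multiclass uniform-convergence bound then gives that for $m=O\!\big(\tfrac{1}{\epsilon^2}[\,dn^2\log(1/\epsilon)+\log(1/\delta)\,]\big)$, with probability at least $1-\delta$, every consistent $h_{\tilde w}$ has $\prob_{x\sim\distx,\,r\sim U[0,1]}[\,h_{\tilde w}(x,r)\ne h_w(x,r)\,]\le \epsilon^2$. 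Then I convert this to the stated guarantee. Write $g(x)=\max_i w_i\cdot x$ and $\tilde g(x)=\max_i \tilde w_i\cdot x$; both lie in $[0,1]$ since $\|x\|_2,\|w_i\|_2,\|\tilde w_i\|_2\le 1$ — this is precisely where the norm bounds are used. Fix $x$: if $|\tilde g(x)-g(x)|>\epsilon$, then for every $r$ strictly between $g(x)$ and $\tilde g(x)$ — an interval of length $>\epsilon$ lying inside $[0,1]$ — exactly one of $h_w(x,r),h_{\tilde w}(x,r)$ equals $0$, so $\prob_r[\text{disagree}\mid x]>\epsilon$; and if the predicted winner $i^*=\argmax_i\tilde w_i\cdot x$ is not the true winner, then for every $r<g(x)$ the true outcome is the true winner while the predicted one is $i^*$ or $0$, so $\prob_r[\text{disagree}\mid x]\ge g(x)$. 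Averaging over $x$ against $\prob_{x,r}[\text{disagree}]\le\epsilon^2$ yields $\prob_x[|\tilde g(x)-g(x)|>\epsilon]\le\epsilon$ and $\prob_x[\text{winner wrong and }g(x)>\epsilon]\le\epsilon$, so on a $1-2\epsilon$ mass of $x$ the predicted winning value $\tilde w_{i^*}\cdot x$ is within $\epsilon$ of the truth and the predicted winner is correct except possibly where the true winning bid is already below $\epsilon$; rescaling $\epsilon$ by a constant gives the theorem.

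\emph{Main obstacle.} The reduction is routine; the delicate steps are (i) the $\epsilon^2\to\epsilon$ conversion, where it is essential that the reserve ranges over an interval provably containing both $g(x)$ and $\tilde g(x)$ — the role of the norm bounds — and (ii) the corner case where the true winning bid is $O(\epsilon)$, in which observations cannot pin down \emph{which} bidder wins, so the clean winner-identity claim is only "up to that low-stakes mass." Point (ii) is removed entirely by mixing a constant fraction of reserve-$0$ rounds into the sample: these directly reveal the winner, the same $\tilde w$ stays consistent with them, and the winner-function class over $\mathbb{R}^d$ has combinatorial dimension $O(dn^2)$ as well, so empirical consistency over the combined sample forces $\prob_x[\text{winner wrong}]\le\epsilon$ outright. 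Everything else — LP feasibility, tie-breaking by continuity, and substituting parameters into the uniform-convergence bound — is mechanical.
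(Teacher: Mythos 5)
Your proposal follows the paper's proof in all essentials: bid at random, solve a linear program for weight vectors consistent with the observed winners, invoke a multiclass sample-complexity bound to drive the joint prediction error down to roughly $\epsilon^2$ (the paper counts the $O(m^{(d+1)n^2})$ partitions induced by the $O(n^2)$ pairwise hyperplanes in $(x,b)$-space rather than quoting a Natarajan-dimension bound, but that is the same calculation), and then convert to a per-item guarantee by a Markov argument. The one substantive difference is that the paper draws its bid from the grid $\{0,\epsilon,2\epsilon,\dots,1\}$ rather than uniformly from $[0,1]$, and this is precisely what spares it the corner cases you have to patch: if even one of the $\frac{1+\epsilon}{\epsilon}$ grid bids is mispredicted for an item $x$, the conditional error at $x$ is already at least $\epsilon/(1+\epsilon)$, so Markov gives a $1-\epsilon$ mass of items on which \emph{all} grid bids are predicted correctly; correctness at the grid point $b=0$ then pins down the winner's identity even when the true winning value is tiny, and correctness at the grid points between the true and estimated winning values forces them within $\epsilon$. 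In your continuous version there is also a small inaccuracy: the norm bounds give only $\tilde w_i\cdot x\in[-1,1]$, not $[0,1]$, so when $\max_i \tilde w_i\cdot x<0$ and the true winning value is at most $\epsilon$, the disagreement interval you rely on lies mostly outside $[0,1]$ and the $\epsilon^2\to\epsilon$ conversion for the value estimate breaks in the same low-winning-bid regime as the winner-identity claim. Your fix of devoting a constant fraction of rounds to reserve $0$ repairs both defects at once (consistency on those rounds forces, off an $\epsilon$ mass, the predicted winner to be correct and hence the predicted winning value to be nonnegative), so your argument goes through after that patch with a constant-factor rescaling of $\epsilon$; the paper's discretization accomplishes the same thing with no case analysis.
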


\begin{proof*}
  Our algorithm is simple.  We will participate in $m$ auctions using
  bids chosen uniformly at random from $\{0, \epsilon, 2\epsilon,
  \ldots, 1\}$.  We observe the winners, then solve for a consistent
  set of $\tilde{w}_i$ using linear programming.  Specifically, for
  $t=1,\ldots, m$, if bidder $i_t$ wins item $x_t$ for which we bid
  $b_t$, then we have linear inequalities:
\begin{eqnarray*}
\tilde{w}_{i_t} \cdot x_t & > & \tilde{w}_j \cdot x_t \;\;\;\;(\forall j \neq i_t)\\
\tilde{w}_{i_t} \cdot x_t & > & b_t.
\end{eqnarray*}
Similarly, if we win the item, we have:
\begin{eqnarray*}
b_t & > & \tilde{w}_j \cdot x_t \;\;\;\; (\forall j).
\end{eqnarray*}
Let $\distx^*$ denote the distribution over pairs $(x,b)$ induced by
drawing $x$ from $\distx$ and $b$ uniformly at random from $\{0,
\epsilon, 2\epsilon, \ldots, 1\}$ and consider a $(k+1)$-valued target
function $f^*$ that given a pair $(x,b)$ outputs an integer in
$\{0,1,\ldots, n\}$ indicating the winner (with 0 indicating that our
bid $b$ wins).  By design, the vectors $\tilde{w}_1, \ldots,
\tilde{w}_n$ solved for above yield the correct answer (the correct
highest bidder) on all $m$ pairs $(x,b)$ in our training sample.  We
argue below that $m$ is sufficiently large so that by a standard
sample complexity analysis, with probability at least $1-\delta$, the
true error rate of the vectors $\tilde{w}_i$ under $\distx^*$ is at
most $\epsilon^2/(1+\epsilon)$.  This in particular implies that for
at least a $(1-\epsilon)$ probability mass of items $x$ under
$\distx$, the vectors $\tilde{w}_i$ predict the correct winner for
{\em all} $\frac{1+\epsilon}{\epsilon}$ bids $b \in \{0, \epsilon,
2\epsilon, \ldots, 1\}$ (by Markov's inequality).  This implies that
for this $(1-\epsilon)$ probability mass of items $x$, not only do the
$\tilde{w}_i$ correctly predict the winning bidder but they also
correctly predict the winning bid value up to $\pm \epsilon$ as
desired.

Finally, we argue the bound on $m$.  Any given set of $n$ vectors
$\tilde{w}_1, \ldots, \tilde{w}_n$ induces a $(n+1)$-way partition of
the $(d+1)$-dimensional space of pairs $(x,b)$ based on which of $\{0,
\ldots, n\}$ will be the winner (with 0 indicating that $b$ wins).
Each element of the partition is a convex region defined by
halfspaces, and in particular there are only $O(n^2)$ hyperplane
boundaries, one for each pair of regions.  Therefore, the total number
of ways of partitioning $m$ data-points is at most $O(m^{(d+1)n^2})$.
The result then follows by standard VC upper bounds for desired error
rate $\epsilon^2/(1+\epsilon)$.
\end{proof*}

\subsection{Common value vectors with private randomness}
\newcommand{\epp}{\epsilon'}

We now consider the case that there is just a single common vector
$w$, but we reintroduce the distributions $\Di$.  In particular, there
is some distribution $\distx$ over $x \in R^d$, and the value of
bidder $i$ on item $x$ is $w \cdot x + v_i$ where $v_i \sim \Di$.  As
before, we assume that $\|x\|_2 \leq 1$ and $\|w_i\|_2\leq 1$, and all
valuations are in $[0,1]$.  The goal of the algorithm is to learn both
the common vector $w$ and all the $\Di$.  We show here how we can
solve this problem by first learning a good approximation $\tilde{w}$
to $w$ which then allows us to reduce to the problem of Section
\ref{sec:second-reserve}.  In particular, given parameter $\epp$, we
will learn $\tilde{w}$ such that
$$\Pr_{x \sim \distx}\left( |w \cdot x - \tilde{w} \cdot x| \leq \epp \right) \geq 1-\epp.$$
\noindent Once we learn such a $\tilde{w}$, we can reduce to the case of Section
\ref{sec:second-reserve} as follows: every time the algorithm of
Section \ref{sec:second-reserve} queries with some reserve bid $b$, we
submit instead the bid $b + \tilde{w} \cdot x$.  The outcome of this
query now matches the setting of independent private values, but where
(due to the slight error in $\tilde{w}$) after the $v_i$ are each
drawn from $\D_i$, there is some small random fluctuation that is
added (and an $\epp$ fraction of the time, there is a large
fluctuation).  But since we can make $\epp$ as polynomially small as
we want, this becomes a vanishing term in the independent private
values analysis.  Thus, it suffices to learn a good approximation
$\tilde{w}$ to $w$, which we do as follows.

\begin{thm}
  With probability $\geq 1-\delta$, the algorithm below using running
  time and sample size polynomial in $d$, $n$, $1/\epp$, and
  $\log(1/\delta)$, produces $\tilde{w}$ such that $\Pr_{x \sim
    \distx}[|\tilde{w} \cdot x - w \cdot x| \leq \epp] \geq 1-\epp$.
\end{thm}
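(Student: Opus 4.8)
The plan is to use the one lever we have not yet pulled: we pick our own bid, and we may \emph{randomize} it. Fix a round with item $x\sim\distx$ and write $V=\max_i v_i$; if we bid $b$ we win exactly when $b>w\cdot x+V$, so $\prob[\text{we win}\mid x,b]=G(b-w\cdot x)$, where $G$ is the (unknown) cdf of $V$. The key observation is that \emph{averaging over a uniformly random bid} turns this into a \emph{known affine function of $x$}: if we bid $b\sim\mathrm{Unif}[0,2]$ then, since $V\in[0,1]$ and $w\cdot x\in[0,1]$ force $G\equiv0$ on $(-\infty,0]$ and $G\equiv1$ on $[1,\infty)$,
\[
\mathbb{E}_{b}\!\left[\ind[\text{we win}]\mid x\right]
 \;=\;\tfrac12\!\int_{-w\cdot x}^{\,2-w\cdot x}\! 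G(t)\,dt
 \;=\; C-\tfrac12\,(w\cdot x),
 \qquad C:=\tfrac12\Big(1+\textstyle\int_0^1 G(t)\,dt\Big),
\]
with $C$ independent of $x$. So after randomizing our bid, predicting whether we won is exactly the problem of learning an affine function of $x$ from $\{0,1\}$-labelled examples whose Bayes predictor is $g^{*}(x)=C-\tfrac12 w\cdot x$: a garden-variety noisy linear regression, the entire nonparametric family $\{\Di\}$ having collapsed into the single harmless constant $C$.

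Concretely, I would play $m$ rounds, in round $t$ seeing $x_t$, bidding $b_t\sim\mathrm{Unif}[0,2]$ independently, and recording $y_t=\ind[\text{we won round }t]$; then compute the least-squares fit $(\hat a,\hat c)=\textrm{argmin}_{\|a\|_2\le1,\;|c|\le1}\sum_t (a\cdot x_t+c-y_t)^2$ and output $\tilde w:=-2\hat a$. The analysis then proceeds in three routine steps. First, the displayed identity shows $\mathbb{E}[y\mid x]=g^{*}(x)$ with $g^{*}$ in the hypothesis class ($\|\tfrac12 w\|_2\le1$, $|C|\le1$); since the labels are bounded and the target is realized inside the class, the standard sample-complexity bound for square loss over a linear class of pseudo-dimension $O(d)$ gives, with $m=\poly(d,1/\zeta,\log(1/\delta))$ samples, $\mathbb{E}_{x\sim\distx}\big[(\hat a\cdot x+\hat c-g^{*}(x))^2\big]\le\zeta$ with probability $\ge1-\delta$. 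Second, choose $\zeta$ to be a sufficiently small polynomial in $\epp$ so that this forces $\mathbb{E}_x\big[(\tilde w\cdot x-w\cdot x)^2\big]\le\epp^{3}$. Third, Markov's inequality yields $\prob_{x\sim\distx}\big[|\tilde w\cdot x-w\cdot x|>\epp\big]\le\epp$, which is the claim.

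I expect the real content to be entirely in the first step — realizing that randomizing our own bid is what decouples the problem from the unknown distributions and reduces a censored, nonparametric estimation task to ordinary linear regression; once that is in hand everything after is off-the-shelf. The one place that needs care is the additive constant $C$, and, relatedly, identifiability: the regression recovers $g^{*}$ as a \emph{function} on $\mathrm{supp}(\distx)$, hence recovers $w\cdot x$ there only up to the unknown offset $C$. When $\distx$ is non-degenerate — its second-moment matrix together with the constant coordinate being well-conditioned, which is also exactly what is needed for $w$ itself to be identified — the least-squares slope pins down $-\tfrac12 w$ and step two goes through (with the conditioning number entering polynomially). Otherwise $w$ is genuinely not identifiable from these observations, but this is harmless downstream: in the reduction of the next subsection a constant shift applied to all of our reserve bids merely relabels the learned $\Di$'s by a common shift, so it suffices to output the affine estimate $x\mapsto -2\hat a\cdot x+2(\hat c-\widehat C)$, with $\widehat C$ the plug-in estimate of $C$ built from $\tfrac1m\sum_t y_t$ and $\tfrac1m\sum_t x_t$. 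Finally, if our bids must lie in $[0,1]$ rather than $[0,2]$, the integral above picks up a residual term $\int_{1-w\cdot x}^{1}(1-G)\,dt$ that is monotone in $w\cdot x$ and at most $w\cdot x$ in size, so it folds into the same analysis at the cost of a little more bookkeeping.
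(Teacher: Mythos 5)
Your proposal is correct and is essentially the paper's own argument: randomize your own bid uniformly, observe only the win/lose outcome, note that $\mathbb{E}[y\mid x]$ is an affine function of $w\cdot x$ in which the unknown distributions collapse into the single constant $\mathbb{E}[\max_i v_i]$, fit by least squares over the linear class, and convert the excess square loss to the stated guarantee via Markov. The paper sidesteps your identifiability/offset digression (and the small sign slip in your final plug-in formula) by adding a constant feature $x_0=1$ with weight $w_0=\mathbb{E}[\max_i v_i]$, so the target is exactly realizable and the recovered constant shift is absorbed into the $\D_i$'s, and since it assumes all valuations lie in $[0,1]$ it bids uniformly in $[0,1]$, making your boundary bookkeeping unnecessary.
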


\begin{proof*}
  Let $\D_{max}$ denote the distribution over $\max[v_1,...,v_n]$.  By
  performing an additive offset, specifically, by adding a new feature
  $x_0$ that is always equal to 1 and setting the corresponding weight
  $w_0$ to be the mean value of $\D_{max}$, we may assume without loss
  of generality from now on that $\D_{max}$ has mean value
  $0$.\footnote{Adding such an $x_0$ and $w_0$ has the effect of
    modifying
each $v_i$ to $v_i-E[v_{max}]$.  The resulting distributions over $w \cdot x + v_i$ are
all the same as before, but now $\D_{max}$ has a zero mean value.}

Now, consider the following distribution over labeled examples
$(x,y)$.  We draw $x$ at random from $\distx$.  To produce the label
$y$, we bid a uniform random value in $[0,1]$ and set $y=1$ if we lose
and $y=0$ if we win (we ignore the identity of the winner when we
lose).  The key point here is that if the highest bidder for some item
$x$ bid a value $b \in [0,1]$, then with probability $b$ we lose and
set $y=1$ and with probability $1-b$ we win and set $y=0$.  So,
$\exp[y] = b$.  Moreover, since $b = w \cdot x + v_{max}$, where
$v_{max}$ is picked from $\D_{max}$ which has mean value of 0, we have
$\exp[b|x] = w \cdot x$.  So, $\exp[y|x] = w \cdot x$.

So, we have examples $x$ with labels in $\{0,1\}$ such that $\exp[y|x]
= w \cdot x$.  This implies that $w \cdot x$ is the predictor of
minimum squared loss over this distribution on labeled examples (in
fact, it minimizes mean squared error for every point $x$).  Moreover,
any real-valued predictor $h(x) = \tilde{w} \cdot x$ that satisfies
the condition that $\exp_{(x,y)}[(\tilde{w} \cdot x - y)^2] \leq
\exp_{(x,y)}[(w \cdot x - y)^2] + \epp^3$ must satisfy the condition:
$$\Pr_{x \sim \distx}\left( |w \cdot x - \tilde{w} \cdot x| \leq \epp \right) \geq 1-\epp.$$
This is because a predictor that fails this condition incurs an
additional squared loss of $\epp^2$ on at least an $\epp$ probability
mass of the points.  Finally, since all losses are bounded (we know
all values $w \cdot x$ are bounded since we have assumed all
valuations are in $[0,1]$, so we can restrict to $\tilde{w}$ such that
$\tilde{w}\cdot x$ are all bounded), standard confidence bounds imply
that minimizing mean squared error over a sufficiently (polynomially)
large sample will achieve the desired near-optimal squared loss over
the underlying distribution.
\end{proof*}

\bibliographystyle{plainnat}
\bibliography{sources}

\appendix
\section{Inequalities}\label{sec:inequalities}

\begin{lemma}\label{lem:conditional}
  Suppose $X$ is observable and $Y$ is observable, and assume that
  $\prob[Y]\geq \ibot$. 
 Using $2T$ samples, with probability $1-\delta$, we can estimate $\prob[X|Y] =
  \frac{\prob[X\cap Y]}{\prob[Y]}$ buy $\hat{p}$ such that
 \[
 \prob[X|Y]-\alpha-\mu\leq
 (1-\mult)\prob[X| Y] - \add \leq \hat{p} \leq
  (1+\mult)\prob[X| Y] + \add\leq \prob[X|Y]+\alpha+\mu,\]
\end{lemma}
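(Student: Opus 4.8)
The plan is to use the plug-in estimator and propagate Hoeffding error through the ratio $\prob[X\mid Y]=\prob[X\cap Y]/\prob[Y]$. Concretely: split the $2T$ samples into two independent batches of size $T$; on the first, let $\hat q$ be the empirical frequency of the observable event $X\cap Y$; on the second, let $\hat r$ be the empirical frequency of $Y$; and output $\hat p=\hat q/\hat r$. Write $q=\prob[X\cap Y]$ and $r=\prob[Y]\ge\ibot$, so the target equals $q/r$ and it suffices to show $\hat p\in[(1-\mult)q/r-\add,\ (1+\mult)q/r+\add]$; the outermost pair of inequalities in the statement then comes for free, since $q/r\in[0,1]$ gives $\mult\cdot(q/r)\le\mult$.

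First I would establish concentration. Set $\eta=\tfrac{\ibot}{2}\min(\add,\mult)$. For $T$ at least the (polynomial) value used in \kaplan, Hoeffding's inequality gives $|\hat q-q|\le\eta$ and $|\hat r-r|\le\eta$, each with probability at least $1-\delta/2$, since that choice of $T$ makes $2T\eta^2$ exceed $\ln(4/\delta)$ by a wide margin; a union bound places us in the good event with probability $1-\delta$. On this event $\hat r\ge\ibot-\eta\ge\ibot/2>0$, so $\hat p$ is well defined and $\eta/\hat r\le\eta/(\ibot-\eta)\le 2\eta/\ibot\le\min(\add,\mult)$.

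Then I would do the error propagation, which is the only place needing care about directions. For the upper bound,
\[
\hat p \;\le\; \frac{q+\eta}{r-\eta} \;=\; \frac{q}{r}\cdot\frac{1}{1-\eta/r}+\frac{\eta}{r-\eta}
\;\le\; \Big(1+\tfrac{2\eta}{r}\Big)\frac{q}{r}+\frac{2\eta}{\ibot}
\;\le\; (1+\mult)\,\prob[X\mid Y]+\add,
\]
where $\tfrac{1}{1-\eta/r}\le 1+\tfrac{2\eta}{r}$ uses $\eta/r\le\eta/\ibot\le\tfrac12$, and the last step uses $2\eta/\ibot\le\mult$ together with $2\eta/\ibot\le\add$. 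For the lower bound, symmetrically,
\[
\hat p \;\ge\; \frac{q-\eta}{r+\eta}
\;=\; \frac{q}{r}\cdot\frac{1}{1+\eta/r}-\frac{\eta}{r+\eta}
\;\ge\; \Big(1-\tfrac{\eta}{\ibot}\Big)\frac{q}{r}-\frac{\eta}{\ibot}
\;\ge\; (1-\mult)\,\prob[X\mid Y]-\add
\]
(when $q<\eta$ this bound is vacuous since $\hat p\ge 0$). This gives exactly the two inner inequalities, hence the lemma. I expect no real obstacle: the substantive content is Hoeffding plus a union bound, and the only thing to watch is that the single choice $\eta=\tfrac{\ibot}{2}\min(\add,\mult)$ simultaneously pays for the multiplicative slack ($2\eta/\ibot\le\mult$) and the additive slack ($2\eta/\ibot\le\add$) while remaining $\gg 1/\sqrt{T}$.
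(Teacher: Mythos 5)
Your proof is correct, and it follows essentially the same route the paper uses for this kind of estimate (the source actually omits an explicit proof of Lemma~\ref{lem:conditional}, but the identical technique—Chernoff/Hoeffding bounds on the empirical numerator and denominator, then propagating the error through the ratio using $\prob[Y]\geq\ibot$ and the manipulations of Fact~\ref{fact:divmult}—is exactly what the paper does in its proof of Lemma~\ref{lem:iwin}). Your handling of the $q<\eta$ case and the observation that the outer inequalities follow from $\prob[X|Y]\leq 1$ are fine.
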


As a direct corollary, we know that \inside is a close approximation
to the quantity it estimates.
\begin{corollary}\label{cor:inside}
  $\inside(\l{\tau}, \l{\tau+1}, T)$ outputs an estimator $p^{\in}_{\ljo,
    \lj}$, such that, for $T$ as in \kaplan,

 \[(1-\mult)\prob[\max_j b_j
  \geq \ljo| \max_j b_j \leq \lj] - \add\leq p^{\in}_{\ljo, \lj} \leq
  (1+\mult)\prob[\max_j b_j \geq \ljo| \max_j b_j \leq \lj] + \add\]
and uses $2T$ samples.
\end{corollary}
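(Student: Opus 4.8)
The plan is to read Corollary~\ref{cor:inside} off of Lemma~\ref{lem:conditional} by choosing the right pair of observable events. First I would take $X$ to be the event $\{\max_j b_j \ge \ljo\}$ and $Y$ the event $\{\max_j b_j \le \lj\}$, so that $\prob[X \mid Y]$ is exactly the quantity \inside is designed to estimate. Both events are observable via a single reserve query: running the auction with reserve $r$ and recording whether the inserted reserve bidder $0$ wins reveals precisely the indicator of $\{\max_j b_j < r\}$, and since no $\Di$ has a point mass, ties occur with probability $0$, so "$<$" and "$\le$" are interchangeable. Hence $\prob[Y]$ is estimable from a reserve-$\lj$ batch, and because $\{\max_j b_j < \ljo\}$ and $X\cap Y = \{\ljo \le \max_j b_j \le \lj\}$ partition $Y$, we have $\prob[X\cap Y] = \prob[\max_j b_j \le \lj] - \prob[\max_j b_j < \ljo]$, a difference of two threshold probabilities, one estimable from a reserve-$\lj$ batch and one from a reserve-$\ljo$ batch.

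Next I would verify that \inside computes exactly the ratio estimator $\widehat{\prob}[X\cap Y]/\widehat{\prob}[Y]$ from the proof of Lemma~\ref{lem:conditional}. Its two size-$T$ samples, taken with reserves $\ljo$ and $\lj$, produce empirical estimates $a$ of $\prob[\max_j b_j < \ljo]$ and $b$ of $\prob[\max_j b_j \le \lj]$; the returned value is $1 - a/b = (b-a)/b$, i.e.\ an empirical estimate of $\prob[X\cap Y]$ over an empirical estimate of $\prob[Y]$, using $2T$ samples in total. The only hypothesis of Lemma~\ref{lem:conditional} that is not automatic, namely $\prob[Y] \ge \ibot$, i.e.\ $\prob[\max_j b_j \le \lj] \ge \ibot$, is maintained in every context in which \inside is invoked (in particular whenever \intervals calls it, the upper endpoint $\lj$ always has enough mass of $\max_j b_j$ below it).

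Given these identifications, Lemma~\ref{lem:conditional}, instantiated with the parameters $\add$ and $\mult$ of \kaplan, immediately gives that with probability at least $1-\delta$ the output $p^{\in}_{\ljo, \lj}$ satisfies $(1-\mult)\prob[X\mid Y] - \add \le p^{\in}_{\ljo, \lj} \le (1+\mult)\prob[X\mid Y] + \add$, where $\prob[X\mid Y] = \prob[\max_j b_j \ge \ljo \mid \max_j b_j \le \lj]$, which is precisely the claimed bound, and the sample count is the claimed $2T$. The only work is the bookkeeping of the two previous paragraphs --- confirming observability, matching \inside's formula to the lemma's estimator, and checking the mass condition on $\lj$ --- so I expect no real obstacle; this is why the statement is phrased as a direct corollary.
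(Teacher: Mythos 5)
Your proposal is correct and matches the paper's own treatment: the paper gives no separate argument for Corollary~\ref{cor:inside} beyond declaring it a direct instantiation of Lemma~\ref{lem:conditional}, and your choice of $X=\{\max_j b_j \ge \l{\tau}\}$, $Y=\{\max_j b_j \le \l{\tau+1}\}$, the observability-via-reserve argument, and the check that $\prob[Y]\ge\ibot$ holds wherever \inside is invoked are exactly the intended bookkeeping. (You also, reasonably, read \inside's returned ratio in the intended orientation, quietly correcting what appears to be a swapped numerator and denominator in the pseudocode.)
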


Now, we prove Lemma~\ref{lem:winbid}, which is also a corollary of
Lemma~\ref{lem:conditional}.
\begin{proof}[Proof of Lemma~\ref{lem:winbid}]
Let, for a fixed $i, \l{\tau}, \l{\tau+1}$, the event that
  $\bidin{i}{\l{\tau}}{\l{\tau+1}}$ be denoted by $X$, the event that
  $\winin{i}{\l{\tau}}{\l{\tau+1}}$ be denoted by $Y$, and the event that
  $\allbidless{\l{\tau+1}}$ be denoted by $C$.

With this notation, we have an estimate of $\prob[Y|C]$ and want an
estimate of $\prob[X|C]$.
\begin{align*}
\prob[Y | C] & = \prob[ X | C] \times \prob[ Y | C, X]\\
& \geq  \prob[ X | C] \times \prob[ \textrm{everyone but $i$ bids $< \l{\tau}$}| C, X]\\
& =  \prob[ X | C] \times \prob[ \textrm{everyone but $i$ bids $< \l{\tau}$}| C]\\
& \geq  \prob[ X | C] \times (1- \beta)
\end{align*}

The first equality comes from the fact that $Y \subseteq X$, the next
inequality comes from the fact that, conditioned on $C$ and $X$, everyone
but $i$ bids $< \l{\tau}$ is a subset of $Y$ (the times when $i$ will win),
the next equality comes from the fact that $i$'s bid and $j$'s bid are
independent, and the final inequality follows from the assumption
 $\prob[\max_{j\neq i} b_j < \l{\tau} | \max_{j\neq i} b_j <\l{\tau+1}]
\geq 1 - \beta$.
\end{proof}

\begin{fact}\label{fact:divmult}
  Suppose $x\geq 0$ and $0 < \eta < \frac{1}{2}$. Then $\frac{x}{1 + \eta} \geq
  (1-\eta)x$ and $\frac{x}{1-\eta}\leq (1+2\eta)x$.
\end{fact}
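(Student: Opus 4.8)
\textbf{Proof proposal for Fact~\ref{fact:divmult}.}

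The plan is to reduce both inequalities to statements that do not involve $x$ at all. Since $x \geq 0$, the case $x = 0$ makes both inequalities trivial equalities, so I would assume $x > 0$ and divide through by $x$. The first claim $\frac{x}{1+\eta} \geq (1-\eta)x$ then becomes $\frac{1}{1+\eta} \geq 1 - \eta$; multiplying both sides by the positive quantity $1+\eta$ (positive because $\eta > 0$), this is equivalent to $1 \geq (1-\eta)(1+\eta) = 1 - \eta^2$, which holds because $\eta^2 \geq 0$. Note this direction in fact only uses $\eta > -1$, so the hypothesis $0 < \eta < \tfrac12$ is more than enough.

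For the second claim $\frac{x}{1-\eta} \leq (1+2\eta)x$, after dividing by $x$ it reads $\frac{1}{1-\eta} \leq 1 + 2\eta$. Here I would multiply both sides by $1-\eta$, which is strictly positive since $\eta < \tfrac12 < 1$, so the inequality direction is preserved, yielding the equivalent statement $1 \leq (1+2\eta)(1-\eta)$. Expanding the right-hand side gives $1 + 2\eta - \eta - 2\eta^2 = 1 + \eta(1 - 2\eta)$, so the claim is equivalent to $\eta(1-2\eta) \geq 0$. This is exactly where the upper bound $\eta < \tfrac12$ is used: together with $\eta > 0$ it makes both factors positive, so the product is positive and the inequality holds (in fact strictly).

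There is essentially no obstacle here; the only point requiring a moment of care is tracking the signs of the quantities one multiplies through by, so that no inequality flips — both $1+\eta$ and $1-\eta$ are positive under the hypothesis $0 < \eta < \tfrac12$, which is what licenses the reductions above. I would present the argument in two or three lines, one display per inequality, and conclude.
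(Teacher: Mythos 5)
Your proof is correct. It takes a slightly different route from the paper's: you clear denominators and reduce each inequality to a polynomial statement in $\eta$ alone --- namely $1 \geq (1-\eta)(1+\eta) = 1-\eta^2$ for the first, and $1 \leq (1+2\eta)(1-\eta) = 1 + \eta(1-2\eta)$ for the second --- whereas the paper proves the first by rewriting $\frac{x}{1+\eta} = \frac{(1-\eta)x}{1-\eta^2}$ and the second via the geometric series $\frac{1}{1-\eta} = \sum_{i\geq 0}\eta^i = 1 + \eta\sum_{i\geq 0}\eta^i \leq 1+2\eta$, using $\sum_{i\geq 0}\eta^i \leq 2$ for $\eta \leq \tfrac12$. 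The two arguments are equally elementary; yours has the minor advantage of making completely explicit where each hypothesis is used (only $\eta > -1$ is needed for the first inequality, while $0 < \eta < \tfrac12$ is exactly what makes $\eta(1-2\eta) \geq 0$ in the second), and of noting the sign conditions that justify multiplying through by $1+\eta$ and $1-\eta$ without flipping the inequalities. Either version is a perfectly acceptable proof of the fact.
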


\begin{proof}[Proof of Fact~\ref{fact:divmult}]
We prove $\frac{x}{1 + \eta} \geq (1-\eta)x$ first.
\begin{align*}
\frac{x}{1+\eta}  = \frac{(1-\eta)x}{1 - \eta^2} \geq (1-\eta)x \;\;\;\;\;\;\;\;\;\;\;\;\;\; \textrm{(Since $1-\eta^2 < 1$)}
\end{align*}

Now, we prove $\frac{x}{1-\eta} \leq (1+2\eta)x$, for $\eta\leq 1/2$. We have,
\begin{align*}
\frac{x}{1-\eta} & = x\sum_{i=0}^\infty \eta^i = x\left(1+\eta(\sum_{i=0}^\infty \eta^i)\right)\leq (1+2\eta)x,
\end{align*}
where the inequality follows from the fact that for $\eta\leq 1/2$ we have $\sum_{i=0}^\infty \eta^i =\frac{1}{1-\eta}\leq 2$.
\end{proof}
%
%
%
%
%
%
%


\begin{proof}[Proof of Lemma~\ref{lem:iwin}]
  We start by showing that, with no sampling error, the calculation
  $p^i_{x,y}$ we do is equivalent to $q^i_{x, y} = \prob[b_i \in [x,
  y] \wedge b_i > \max_{j\neq i} b_j| \max_j b_j < y]$.  When $x = y$,
  we will denote this simply as $q^i_x$ (similarly, $p^i_x$).
  Similarly, let $q^\bot_x$ denote the probability that no one wins
  when the reserve bidder is set to bid $x$ (and $p^\bot_x$ the
  empirical probability therein).

 By definition,

\begin{align*}
q^i_{\l{\tau}, \l{\tau+1}} &= \prob[b_i \in [\l{\tau}, \l{\tau+1}] \wedge b_i > \max_{j\neq i} b_j| \max_j b_j < \l{\tau+1}]\\
& = \frac{\prob[b_i \in [\l{\tau}, \l{\tau+1}] \wedge b_i > \max_{j\neq i} b_j \wedge \max_j b_j < \l{\tau+1}]}{\prob[\max_j b_j < \l{\tau+1}]}\\
& = \frac{\prob[b_i \in [\l{\tau}, \l{\tau+1}] \wedge b_i > \max_{j\neq i} b_j]}{\prob[\max_j b_j < \l{\tau+1}]} & \textrm{($i$ winning in $[\l{\tau}, \l{\tau+1}]$ implies $\max_j b_j < \l{\tau+1}$)} \\
& = \frac{\prob[b_i \geq \l{\tau} \wedge b_i > \max_{j\neq i} b_j] - \prob[b_i \geq \l{\tau+1} \wedge b_i > \max_{j\neq i} b_j]}{\prob[\max_j b_j < \l{\tau+1}]} \\
& = \frac{\prob[i \textrm{ wins with reserve } \l{\tau}] - \prob[i \textrm{ wins with reserve } \l{\tau+1}]}{\prob[\max_j b_j < \l{\tau+1}]} & \textrm{(Assuming no point masses, there are no ties)}\\
& = \frac{q^i_{\l{\tau},1} - q^i_{\l{\tau+1},1}}{q^\bot_{\l{\tau+1}}}
\end{align*}

The final form is identical to the estimated quantity used by \iwin.
It now suffices to now show that each of the three samples give us
good estimates of their respective true probabilities. A basic
Chernoff bound implies

  \[\prob[|p^i_{x,1} - q^i_{x, 1}| \geq \frac{\add\ibot(1-\mult)}{4}]\leq 2e^{-T\frac{1}{8}t_1\add^2\ibot^2(1-\mult)^2}.\]
Substituting $T = \frac{8\ln 6/\delta'}{\alpha^2\gamma^2\left(\frac{\mu}{2}\right)^2}$, and
  noting $\mult < 1- \mult$, we have
\[\prob[|p^i_{x,1} - q^i_{x, 1}| \geq \frac{\add\ibot(1-\mult)}{4}]\leq \delta'\]
for each of $x=\l{\tau}, \l{\tau+1}$. Similarly,

\[\prob[|p^\bot_x - q^\bot_x|> \frac{\mult\ibot}{2}]\leq 2e^{-\frac{T}{2} \mult^2\ibot^2}\]
and substituting for $T$, we have that $|p^{\bot}_{\l{\tau+1}} -
q^{\bot}_{\l{\tau+1}}| \geq \frac{\mult\ibot}{2}$ with probability at
most $\delta'$.  Thus, using a union bound, we have that with probability
at least $1-3\delta'$, for a particular $t$,

\begin{align}
\frac{q^{i}_{\l{\tau},1} - q^{i}_{\l{\tau+1},1} - \frac{\add\ibot(1-\mult)}{2}}{q^{\bot}_{\l{\tau+1}}+\frac{\mult\ibot}{2}}\leq \frac{p^{i}_{\l{\tau},1}- p^{i}_{\l{\tau+1},1}}{p^\bot_{\l{\tau+1}}}\leq \frac{q^{i}_{\l{\tau},1} - q^{i}_{\l{\tau+1},1} + \frac{\add\ibot(1-\mult)}{2}}{q^{\bot}_{\l{\tau+1}}-\frac{\mult\ibot}{2}}\label{eqn:total}
\end{align}

Now, it suffices to show that Equation~(\ref{eqn:total}) implies the
relative error stated previously. By assumption, $p^{i}_{0,\l{\tau+1}} >
\ibot$. This implies that the probability everyone bids at most $\l{\tau+1}$
is at least $\ibot$ (for a winning bid of $\l{\tau+1}$ to win, all bids must
be at most $\l{\tau+1}$), so

\begin{align}
q^{\bot}_{\l{\tau+1}} \geq \ibot.\label{eqn:pbot}
\end{align}

Then,

\begin{align*}
   p^{i}_{\l{\tau},\l{\tau+1}}
& =
  \frac{p^{i}_{\l{\tau},1} - p^{i}_{\l{\tau+1},{1}}}{p^{\bot}_{\l{\tau+1}}}\\
& \geq
  \frac{q^{i}_{\l{\tau},1} - q^{i}_{\l{\tau+1},{1}} -
    \frac{1}{2}\add\ibot(1-\mult)}{q^{\bot}_{\l{\tau+1}}+\frac{\mult\ibot}{2}}\\
 & \geq
  \frac{q^{i}_{\l{\tau},1} - q^{i}_{\l{\tau+1}, 1} -
    \add\ibot}{q^{\bot}_{\l{\tau+1}}+\mult\ibot} & \textrm{(Since $\frac{(1-\mult)}{2} < 1$)}\\
  &\geq  \frac{q^{i}_{\l{\tau}, 1} - q^{i}_{\l{\tau+1}, 1} -
    \add\ibot}{q^{\bot}_{\l{\tau+1}}+\mult q^{\bot}_{\l{\tau+1}}} & \textrm{(By Eq.~(\ref{eqn:pbot}))}\\
 &=  \frac{q^{i}_{\l{\tau},1} - q^{i}_{\l{\tau+1},1} - \add\ibot}
 {q^{\bot}_{\l{\tau+1}}(1+\mult)}\\
  & = \frac{q^{i}_{\l{\tau},\l{\tau+1}}}{1+\mult} - \frac{ \add\ibot}{q^{\bot}_{\l{\tau+1}}(1+\mult)}\\
  & \geq \frac{q^{i}_{\l{\tau},\l{\tau+1}}}{1+\mult} - \frac{ \add}{(1+\mult)} & \textrm{(By Eq.~(\ref{eqn:pbot}))}\\
  & \geq \frac{q^{i}_{\l{\tau},\l{\tau+1}}}{1+\mult} - \add\\
  & \geq (1-\mult)q^{i}_{\l{\tau},\l{\tau+1}} - \add & \textrm{(By Fact~\ref{fact:divmult})}
\end{align*}

Now, we prove the upper bound on our estimator.

\begin{align*}
 p^{i}_{\l{\tau},\l{\tau+1}}
& =
  \frac{p^{i}_{\l{\tau},1} - p^{i}_{\l{\tau+1},{1}}}{p^{\bot}_{\l{\tau+1}}}\\
 &\leq \frac{q^{i}_{\l{\tau}, 1} - q^{i}_{\l{\tau+1}, 1} + \frac{(1-\mult)}{2}\add\ibot}{q^{\bot}_{\l{\tau+1}}-\frac{\mult\ibot}{2}}\\
& \leq \frac{q^{i}_{\l{\tau}, 1} - q^{i}_{\l{\tau+1}, 1} + (1-\mult)\add\ibot}{q^{\bot}_{\l{\tau+1}}-\frac{\mult\ibot}{2}}\\
& \leq\frac{q^{i}_{\l{\tau}, 1} - q^{i}_{\l{\tau+1}, 1} + (1-\mult)\add\ibot}{q^{\bot}_{\l{\tau+1}}-\frac{\mult q^{\bot}_{\l{\tau+1}, \l{\tau+1}}}{2}} & \textrm{(By Eq.~(\ref{eqn:pbot}))}\\
 & \leq \frac{q^{i}_{\l{\tau}, \l{\tau+1}}}{1-\frac{\mult}{2}} + \frac{(1-\mult)\add\ibot}{q^{\bot}_{\l{\tau+1}}(1-\frac{\mult}{2})}\\
& \leq \frac{q^{i}_{\l{\tau}, \l{\tau+1}}}{1-\frac{\mult}{2}} + \frac{\add\ibot}{q^{\bot}_{\l{\tau+1}}}\\
& \leq \frac{q^{i}_{\l{\tau}, \l{\tau+1}}}{1-\frac{\mult}{2}} + \add & \textrm{(By Eq.~(\ref{eqn:pbot}))}\\
& \leq (1+2\frac{\mult}{2})q^{i}_{\l{\tau}, \l{\tau+1}} + \add & \textrm{(By Fact.~\ref{fact:divmult})}\\
& = (1+\mult)q^{i}_{\l{\tau}, \l{\tau+1}} + \add
\end{align*}

Thus, both the upper and lower bounds on the estimator hold with
probability $1-\delta$.
\end{proof}

\begin{proof}[Proof of Lemma~\ref{lem:intervals}]
We will show each of the three parts to be true.
\begin{enumerate}
\item We start by proving that \intervals will output a partition with
  at most $\frac{24nL}{\beta\ibot}$ intervals.  We claim that each
  interval is at least $\frac{\beta\ibot}{24nL}$ in length, implying
  the above bound on the total number of intervals.

  Consider some current upper bound for an interval $\l{\tau+1}$.  If
  \intervals accepts some point $\l{\tau}$ such that $\l{\tau+1} - \l{\tau}
  \geq \frac{\beta\ibot}{24nL}$, then the bound trivially holds.

  If this does not hold, \intervals tests some point $\est{\l{\tau}}$ such that

\[\frac{\beta\ibot}{24nL} \geq \l{\tau+1} - \est{\l{\tau}} \geq \frac{\beta\ibot}{48nL}\]

since it is doing binary search. We claim \intervals will accept
$\est{\l{\tau}}$; if this is the case, the interval will have length at
least $\frac{\beta\ibot}{48nL}$. Notice that

\[ \prob[\max_j b_j\in  [\est{\l{\tau}}, \l{\tau+1}] | \max_j b_j \leq \l{\tau+1}]]
\leq  \prob[\max_j b_j\in  [\l{\tau+1} - \frac{\beta\ibot}{24nL}, \l{\tau+1}] | \max_j b_j \leq \l{\tau+1}]]\]

so it will suffice to show that \intervals would accept the smallest
possible value of $\est{\l{\tau}}$ (since that region will have the most
probability mass).  We bound the ratio, for a given $\l{\tau+1}$ such
that

\[\prob[\max_j b_j \in [\l{\tau+1} - \frac{\beta\ibot}{24nL}, \l{\tau+1}] | \max_j b_j \leq \l{\tau+1}] = \frac{\prob[\max_j b_j \leq \l{\tau+1} -  \frac{\beta\ibot}{24nL} ]}{\prob[\max_j b_j \leq \l{\tau+1}]}\]

for some upper point of an interval $\l{\tau+1}$ such that
$\prob[i\textrm{ wins with a bid }\leq \l{\tau+1}] \geq \ibot$.  Since
$F_j$ is $L$-Lipschitz for all $j$,

\[\prob[b_j \leq \l{\tau+1}] - \prob[b_j \leq \l{\tau+1}-\frac{\beta\ibot}{24nL} ] \leq L \frac{\beta\ibot}{24Ln}= \frac{\beta\ibot}{24n}.\]

Then, by summing this probability over all $n$ bidders, we have

\[\prob[\max_j b_j \leq \l{\tau+1}] - \prob[\max_j b_j \leq \l{\tau+1}-\frac{\beta\ibot}{24nL}] \leq \frac{\beta\ibot}{24} .\]

Rearranging terms, we have

\[\frac{\prob[\max_j b_j \leq \l{\tau+1}- \frac{\beta\ibot}{24nL}]}{ \prob[\max_j b_j \leq \l{\tau+1}]} \geq 1- \frac{\beta'\ibot}{\prob[\max_j b_j \leq \l{\tau+1}]} \geq 1 - \frac{\beta}{24}
\]

where the last inequality came from the fact that $\prob[i\textrm{
  wins with a bid }\leq \l{\tau+1}] \geq \prob[\max_j b_j \leq \l{\tau+1}]
\geq \ibot$. So, \intervals will accept $\est{\l{\tau}}$ as $\l{\tau}$, so
long as the empirical estimate of \inside is correct up to $\add +
\mult = \frac{\beta}{48}$, which is the case by
Corollary~\ref{cor:inside} with probability $1-3\delta'$.

\item

We now need to show

\[\prob[\max_j b_j \geq \l{\tau-1}|\max_j b_j \leq \l{\tau}] \leq \frac{\beta}{16}\]

holds for the lattice points $t > 3$.  Since $\prob[\max_j b_j \leq
\l{3} ] \geq \ibot$, by Corollary~\ref{cor:inside}, the accuracy
guarantee holds with probability $1 - 3\delta'$ for a fixed $t$ (since
$\add = \frac{\beta^2}{96}, \mult = \frac{\beta}{96}$, and the
condition by which $\l{\tau-1}$ was accepted was that the empirical
estimate of the above quantity was at most $\frac{\beta}{24}$). Thus,
with probability $1-3k\delta'$, the above holds for all $t>3$.

\item
  We begin by showing $\prob[\max_j b_j \leq \l{2} ] \leq \gamma$ with
  probability at least $1-\delta'$. The condition for stopping the
  search for new interval points is

  \[J = \frac{\sum_{t\in S_1}\ind[i\textrm{ wins on sample }t]}{T} \leq
  \frac{\gamma}{2}\]

  where $S_1$ is a random sample of size $T$ with reserve $\l{1}$. A
  basic Chernoff bound shows that

\[\prob[|J - \prob[\max_j b_j \leq \l{1} ]|\geq \frac{\ibot}{2}] \leq 2e^{-\frac{T\ibot^2}{2}}\]

which, for $T = \frac{32\ln\frac{6}{\delta'}}{\add^2\ibot^2\mult^2}$
is at most $\delta'$, so $\prob_{S_1}[\prob[\max_j b_j \leq \l{2} ]
\leq \gamma] \geq 1- \delta'$, as desired.

\end{enumerate}

It remains to sum up the total error probability and sample
complexity. The lower bound on the length of each interval also
implies a bound on the total number of empirical estimates made to
find a fixed $\l{\tau}$. Formally, the halving algorithm beginning with a
search space of size $\l{\tau+1}\leq 1$ will halt before the remaining
search space has shrunk to $\frac{\beta\ibot}{48Ln}$, which will take
at most $\log\frac{48Ln}{\beta\ibot} = \log(k)$ attempted interval
endpoints per accepted interval endpoint. Each of these attempts calls
\inside, which takes 2 estimates. For each accepted interval, an
estimate of the remaining probability mass is done. Thus, in total,
there are $2k\log(k) + k$ estimates done by \intervals. Each fails
with probability at most $\delta'$, so \intervals succeeds with
probability at least $1-3k\log(k)\delta'$ and uses at most
$3k\log(k)T$ samples.
\end{proof}

\end{document}